\newcommand{\one}{\mathbbm{1}}
\newcommand{\R}{\mathbb{R}}
\newcommand{\BPDN}{\rm{BPDN}}
\newcommand{\x}{\mathbf{x}}
\newcommand{\X}{\mathbf{X}}
\newcommand{\TPR}{\text{TPR}}
\providecommand{\tabularnewline}{\\}
\providecommand{\algorithmname}{Algorithm}
 \let\oldforeign@language\foreign@language
 \DeclareRobustCommand{\foreign@language}[1]{%
   \lowercase{\oldforeign@language{#1}}}
\theoremstyle{plain}
\newtheorem{thm}{\protect\theoremname}
\newtheorem{prop}[thm]{\protect\propname}
\theoremstyle{definition}
\newtheorem{defn}[thm]{\protect\definitionname}
\newtheorem{assumption}{\protect\assumptionname}
\newtheorem{remark}{Remark}
\theoremstyle{plain}
\theoremstyle{plain}
\newtheorem{cor}[thm]{\protect\corollaryname}
\providecommand{\corollaryname}{Corollary}
\providecommand{\propname}{Proposition}
\providecommand{\definitionname}{Definition}
\providecommand{\theoremname}{Theorem}
\providecommand{\assumptionname}{Assumption}
\providecommand{\corollaryname}{Corollary}
\providecommand{\definitionname}{Definition}
\providecommand{\theoremname}{Theorem}
\begin{document}

\title{Sparse regression with highly correlated predictors}

\author{Behrooz~Ghorbani, {\"O}zg{\"u}r~Y{\i}lmaz}
\thanks{Behrooz~Ghorbani is with the Electrical Engineering
  Department, Stanford University, Palo Alto, CA, USA, e-mail: \protect\protect\href{http://b.ghorbani.bg@gmail.com}{b.ghorbani.bg@gmail.com}.%
}%
\thanks{{\"O}zg{\"u}r~Y{\i}lmaz is with the Department of Mathematics,
  University of British Columbia, Vancouver, BC, Canada, e-mail: \protect\protect\href{http://xxx@xxx.xxx}{oyilmaz@math.ubc.ca}.%
}

\maketitle
\begin{abstract}
  We consider a linear regression $y=X\beta+u$ where
  $X\in\mathbb{\mathbb{{R}}}^{n\times p},\; p\gg n,$ and $\beta$ is
  $s-$sparse. Motivated by examples in financial and economic data, we
  consider the situation where $X$ has highly correlated and clustered
  columns. To perform sparse recovery in this setting, we introduce
  the \emph{clustering removal algorithm} (CRA), that seeks to
  decrease the correlation in $X$ by removing the cluster structure
  without changing the parameter vector $\beta$. We show that as long
  as certain assumptions hold about $X$, the decorrelated matrix will
  satisfy the restricted isometry property (RIP) with high
  probability. We also provide examples of the empirical performance
  of CRA and compare it with other sparse recovery
  techniques. \end{abstract}

\section{Introduction}

We consider the \emph{sparse estimation} (or \emph{sparse recovery}) problem given by: 
\begin{equation}
\mbox{ Estimate \ensuremath{\beta}\ from } y=X\beta+u\label{prob}
\end{equation}
where ${X}$ is a known $n\times p$ design matrix, $u$ is additive noise
such that $\|u\|_2\le \eta$ for some known $\eta$,
and $\beta$ has at most $s$ non-zero entries with $s\ll n \ll
p$. Lasso \cite{tibshirani1996regression} and basis pursuit denoise
(BPDN) \cite{Donoho2004} 
are popular sparse recovery approaches proposed for \eqref{prob}, which are based on
solving certain convex optimization problems. On the other hand,
orthogonal matching pursuit (OMP) \cite{4385788}, CoSaMP
\cite{needell2009cosamp}, Iterative Hard Thresholding \cite{BD08}, and
Hard Thresholding Pursuit \cite{doi:10.1137/100806278} are examples of
greedy algorithms that can be used to solve \eqref{prob}. Although
these algorithms have different capabilities in estimating $\beta$,
the success of each is highly dependent on the structure of the design
matrix $X$ and the sparsity level $s$ in relation to $p$ and $n$. It
is well known that when the columns of $X$ have high empirical
correlation, i.e., for some $i\neq j$,
$\rho_{ij}:=\dfrac{X_{i}X_{j}^{*}}{||X_{i}||_{2}||X_{j}||_{2}}$ has
magnitude close to one, the above mentioned sparse recovery methods
do not perform well \cite{2012arXiv1209.5908B}.

In many applications, such as in model selection, $X$ is
already given in terms of observations of different variables of the
data. Often, the variables constituting the columns of $X$
are highly correlated. Therefore, at least for a significant number of
columns $i,j$, $|\rho_{ij}|$ tends to be close to 1, and accordingly,
standard sparse recovery methods, such as the ones we mention above,
fail to yield good estimates for $\beta$. 

In this note, we will focus on this issue, i.e., sparse estimation
when the design matrix has correlated columns. Motivated by various
applications where $X$ is empirically generated, we will further
assume that the columns of $X$ can be organized in a number of
clusters such that the columns that are in the same cluster are highly
correlated but the columns that are in different clusters may or may
not be correlated.  For example, let $X_{t,i}$ be the price of stock
$i$ at time $t$. Due to shared underlying economic factors, we expect
the price vector of, for example, technology stocks to be tightly
clustered together. On the other hand, stocks of telecommunication
companies may form a different cluster. Due to global economic
factors, such as the monetary policy or the overall economic growth,
the stock prices of telecommunication companies and technology
companies may also be correlated, but we expect lower levels of
correlation between them compared to the correlation inside the
clusters. Another example where such cluster structure can be observed
is in face recognition -- see \cite{wright2010dense} for details.


Our approach can be summarized as follows: To overcome the challenge
posed by high correlation in the design matrix, we propose to modify
$X$ in order to get a matrix that is suitable for sparse recovery
without changing the original parameter vector $\beta$.  We seek to do
this by first identifying the clusters (say, $q$ of them) and then constructing a
representative vector for each cluster. Let $R\in\mathbb{R}^{n\times
  q}$ be the matrix constructed by putting together the representative
vectors of the $q$ clusters. We project $X$ to the orthogonal
complement of the range of $R$ and normalize the columns of the
resulting matrix, which we denote by $\tilde{X}$. Our proposed
algorithm, we will prove, is effective when this matrix $\tilde{X}$ is
suitable as a compressive sensing measurement matrix.

After introducing our notation and the necessary background in
Section~\ref{notations}, we state the proposed clustering removal
algorithm (CRA) and our main assumptions in Section~\ref{alg}. In
Section~\ref{sec:Theoretical-Discussions}, we prove that if $X$ is a
realization of a random matrix $\mathbf{X}$ whose columns are
uniformly distributed on $q$ disjoint spherical caps, $\tilde{{X}}$
provides a suitable matrix for sparse recovery with high
probability. In Section \ref{sec:Numerical-Results} we demonstrate the
performance of our algorithm on highly correlated financial data, in
comparison with BPDN and with SWAP, an algorithm proposed by
\cite{vats2013swapping} for sparse recovery in highly correlated
settings. 

\section{Notations and Background}\label{notations}

In what follows, we refer to random matrices and random vectors with
bold letters.  Let $n,r\in \mathbb{N}$ and let $A\in\mathbb{R}^{n\times
  r}$. We denote the $i$th column of $A$ by $A_{i}$. For $T \subseteq
[r]:=\{1,\dots,r\}$, $A_T$ denotes the submatrix of $A$ consisting of
its columns indexed by $T$. We define $\Pi_{A}$ to be the
orthogonal projection operator into $\mathcal{R}(A)$, the range of
$A$, and $\Pi_{A^\perp}$ denotes the orthogonal projection operator
into the orthogonal complement of $\mathcal{R}(A)$. The best $k$-term
approximation of $b\in\mathbb{R}^{n}$ is $b^{\{k\}}\in\mathbb{R}^{n}$
such that $b_{i}^{\{k\}}=b_{i}$ if $b_{i}$ is among the $k$
largest-in-magnitude entries of $b$. Otherwise, $b_{i}^{\{k\}}=0$. The
corresponding best $k$-term approximation error in $\ell_p$ is
$$\sigma_k(b)_{\ell_p}:=\|b-b^{\{k\}}\|_p.$$
For $\Omega\subset
S^{n-1}:=\{x\in \R^n: \|x\|_2=1\}$, we define the \emph{wedge}
$W(\Omega)$ by 
$$W(\Omega):=\{rx: x\in\Omega, r\in[0,1]\}.$$  
We denote the
$n$-dimensional Lebesgue measure with $\lambda^{n}$, and the uniform
spherical measure on $S^{n-1}$ with $\sigma^{n-1}$, which can be
defined via
$$\sigma^{n-1}(\Omega):=\lambda^{n}(W(\Omega))/\lambda^{n}(W(S^{n-1}))$$
provided that $W(\Omega)$ is a measurable subset of $\R^n$ (otherwise
$\Omega$ is not measurable).  It should be noted that although
different definitions of uniform spherical measure exist
in the literature, all these measures must be equal, for example, in
the setting that we have described above \cite{christensen1970some}.

\subsection{Compressive sampling}
It is well known in the compressive sampling literature that BPDN
provides a stable and robust solution for the sparse estimation problem
\eqref{prob} in a computationally tractable way , e.g.,
\cite{Donoho2004,CandesRombergTao2006}. Specifically, BPDN provides
the estimate
\begin{equation}\label{eq:BPDN}
\beta_{\BPDN(y,\; X,\;\eta)}^{\#}:=
\arg\min_{z}||z||_{1}\; s.t\quad||Xz-y||_{2}\leq\eta
\end{equation}
where $\eta$ is an upper bound on $||u||_{2}$. One can guarantee that
$\beta_{\BPDN(y,\; X,\;\eta)}^{\#}$ is a good estimate of $\beta$ if
$\beta$ is sufficiently sparse or \emph{compressible} (i.e., it can be well
approximated by a sparse vector) and the ``restricted isometry
constants'' of $X$ are sufficiently small
\cite{CandesRombergTao2006}, where the restricted isometry constants are defined as follows.
\begin{defn}
Let $\delta_{s}(X)$ be the smallest positive constant such that
$\forall T\subset [p],$
$|T|=s$, $\forall z\in\mathbb{R}^{s}$ 
\[
(1-\delta_{s}(X))||z||_{2}^{2}\leq||X_{T}z||_{2}^{2}\leq(1+\delta_{s}(X))||z||_{2}^{2}.
\]
In this case, we say that $X$ satisfies the
restricted isometry property (RIP) of order $s$ with (restricted
isometry) constant $\delta_s$.
\end{defn}
The following theorem by \cite{caiRIP} is a sharper version of the
original ``stable and robust recovery theorem'' of
\cite{CandesRombergTao2006}.
\begin{thm} $\label{th:estimation}$
Let $\delta_{ts}(X)<\sqrt{\frac{t-1}{t}}$ for some $t\geq\dfrac{4}{3}$.
Then, for any $s$-sparse $\beta\in\mathbb{R}^{p}$, 
\[
||\beta_{\BPDN(y,\; X,\;\eta)}^{\#}-\beta||_{2}\leq \dfrac{C\sigma_{s}(\beta)_{\ell_1}}{\sqrt{s}}+D\eta
\]
where $\beta_{\BPDN(y,\; X,\;\eta)}$ is as in \eqref{eq:BPDN}, and $C,D>0$ are constants depending only on $\delta_{ts}$.
\end{thm}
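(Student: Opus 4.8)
\noindent\emph{Proof strategy.} The plan is to run the standard ``cone plus tube'' analysis for $\ell_1$-minimization, but with the tail of the error vector handled via the sparse-representation-of-a-polytope device of \cite{caiRIP} in place of the classical dyadic bucketing, since it is that device which is responsible for the sharp threshold $\sqrt{(t-1)/t}$. Write $h:=\beta^{\#}-\beta$ for the error of the solution of \eqref{eq:BPDN}, and let $T_0$ be the support of $\beta^{\{s\}}$; since $\beta$ is $s$-sparse one may take $T_0=\operatorname{supp}(\beta)$ and $\sigma_s(\beta)_{\ell_1}=0$, but I keep the general $\sigma$-term so that the argument covers the compressible case verbatim. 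Two preliminary bounds get things started. First, $\|X\beta-y\|_2=\|u\|_2\le\eta$, so $\beta$ is feasible for \eqref{eq:BPDN} and hence $\|\beta^{\#}\|_1\le\|\beta\|_1$; splitting these $\ell_1$ norms over $T_0$ and $T_0^c$ yields the cone condition $\|h_{T_0^c}\|_1\le\|h_{T_0}\|_1+2\sigma_s(\beta)_{\ell_1}$. Second, $X\beta$ and $X\beta^{\#}$ both lie within $\eta$ of $y$, so $\|Xh\|_2\le2\eta$.

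The core step is to split $h$ into $ts$-sparse pieces. Set $\alpha:=\bigl(\|h_{T_0}\|_1+2\sigma_s(\beta)_{\ell_1}\bigr)/\bigl((t-1)s\bigr)$ and let $T_1\subseteq T_0^c$ be the set of indices where $|h|>\alpha$; the cone condition forces $|T_1|<(t-1)s$, so with $m:=(t-1)s-|T_1|$ the set $T_0\cup T_1$ has at most $s+|T_1|$ elements, and $T_0\cup T_1$ together with any $m$-sparse index set has at most $ts$ elements. On the complement, $\|h_{(T_0\cup T_1)^c}\|_\infty\le\alpha$ while $\|h_{(T_0\cup T_1)^c}\|_1=\|h_{T_0^c}\|_1-\|h_{T_1}\|_1\le m\alpha$, so $h_{(T_0\cup T_1)^c}$ lies in the polytope $\{v:\|v\|_\infty\le\alpha,\ \|v\|_1\le m\alpha\}$. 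The key lemma of \cite{caiRIP} represents any such point as a convex combination $h_{(T_0\cup T_1)^c}=\sum_i\lambda_i u_i$ of $m$-sparse vectors $u_i$, each supported inside $(T_0\cup T_1)^c$ with $\|u_i\|_\infty\le\alpha$ and $\|u_i\|_1\le\|h_{(T_0\cup T_1)^c}\|_1$. Hence each $\beta_i:=h_{T_0\cup T_1}+u_i$ is $ts$-sparse, $\sum_i\lambda_i\beta_i=h$, and $\|u_i\|_2^2\le\|u_i\|_\infty\|u_i\|_1\le\alpha\bigl(\|h_{T_0}\|_1+2\sigma_s(\beta)_{\ell_1}\bigr)$; substituting $\alpha$ and using $\|h_{T_0}\|_1\le\sqrt{s}\,\|h_{T_0}\|_2\le\sqrt{s}\,\|h_{T_0\cup T_1}\|_2$ turns this into $\|u_i\|_2\le\tfrac{1}{\sqrt{t-1}}\,\|h_{T_0\cup T_1}\|_2+\tfrac{2}{\sqrt{(t-1)s}}\,\sigma_s(\beta)_{\ell_1}$.

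What remains is RIP bookkeeping with the $ts$-sparse vectors $\beta_i$, $h_{T_0\cup T_1}$ and $h_{T_0\cup T_1}\pm u_i$. Combining the identity $\sum_i\lambda_i\beta_i=h$ with the tube condition, the RIP of order $ts$ (using both inequalities, together with the polarization identity for the cross terms $\langle Xh_{T_0\cup T_1},Xu_i\rangle$), and the estimate for $\|u_i\|_2$, one arrives at an inequality of the form
\[
\bigl(1-\delta_{ts}\,\gamma(t)\bigr)\|h_{T_0\cup T_1}\|_2^2\ \le\ C_0(\delta_{ts})\,\bigl(\eta+\sigma_s(\beta)_{\ell_1}/\sqrt{s}\,\bigr)\,\|h_{T_0\cup T_1}\|_2 ,
\]
where the payoff of carrying the estimates out tightly is that the factor $\gamma(t)$ comes out equal to $\sqrt{t/(t-1)}$, so the left-hand coefficient is positive precisely when $\delta_{ts}<\sqrt{(t-1)/t}$. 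Solving this quadratic gives $\|h_{T_0\cup T_1}\|_2\le D_1\eta+D_2\,\sigma_s(\beta)_{\ell_1}/\sqrt{s}$ with $D_1,D_2$ depending only on $\delta_{ts}$ and blowing up as $\delta_{ts}\uparrow\sqrt{(t-1)/t}$. Finally $\|h\|_2\le\|h_{T_0\cup T_1}\|_2+\bigl\|\sum_i\lambda_i u_i\bigr\|_2\le\|h_{T_0\cup T_1}\|_2+\sum_i\lambda_i\|u_i\|_2$, and the bound for $\|u_i\|_2$ folds back into the same two quantities, giving the claimed estimate with $C=C(\delta_{ts})$ and $D=D(\delta_{ts})$.

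The main obstacle is getting the \emph{sharp} constant rather than merely \emph{some} sufficient condition: a naive partition of $T_0^c$ into blocks of size $s$ (the route of \cite{CandesRombergTao2006}) delivers only a bound like $\delta_{2s}<\sqrt{2}-1$, and even the polytope decomposition, if the cross terms $\langle Xh_{T_0\cup T_1},Xu_i\rangle$ are bounded by a plain term-by-term Cauchy--Schwarz, yields a threshold weaker than $\sqrt{(t-1)/t}$. Extracting the exact value requires exploiting that the $u_i$ form a convex combination whose average is the tail of $h$, so that an averaged quadratic form rather than individual inner products governs the interaction; this is the delicate point and is essentially the content of \cite{caiRIP}. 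The hypothesis $t\ge4/3$ enters only when $(t-1)s$ (equivalently $ts$) is not an integer and one rounds up to $\lceil ts\rceil$: this is harmless for $t\ge4/3$ but would spoil the threshold for smaller $t$.
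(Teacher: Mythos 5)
The paper offers no proof of this theorem---it is imported verbatim from \cite{caiRIP}---so the only meaningful comparison is with that reference, and your sketch is a faithful reconstruction of its argument: cone condition from minimality, tube condition from feasibility, the sparse-representation-of-a-polytope lemma applied to the tail of the error, and RIP of order $ts$ closing a quadratic inequality whose positivity threshold is exactly $\delta_{ts}<\sqrt{(t-1)/t}$. One small correction: the hypothesis $t\ge 4/3$ is not merely an artifact of rounding $ts$ up to an integer; it is needed for $\sqrt{(t-1)/t}$ to be the operative (and sharp) threshold, since for $t<4/3$ the sharp condition is known to be the different quantity $\delta_{ts}<t/(4-t)$, and the averaged-quadratic-form step of the argument you outline breaks down below $t=4/3$.
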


Even though the above theorem provides guarantees for BPDN to recover
(or estimate) a sparse or compressible vector $\beta$ from its
(possibly noisy) compressive samples $y$, these guarantees depend on
$X$ having small restricted isometry constants. Constructing matrices
with (nearly) optimally small restricted isometry constants is an
extremely challenging task and still an open problem. Furthermore,
computing these constants entails a combinatorial computational
complexity and is not tractable as the size of the matrix
increases. As a remedy, the literature has focused on random
matrices. Indeed, one can prove that certain classes of sub-Gaussian
random matrices (see next section) have nearly optimally small
restricted isometry constants with overwhelming probability. Accordingly,
realizations of such random matrices are used in the context of
compressed sensing. We will adopt such an approach.

\subsection{Sub-Gaussian random matrices}
Our theoretical analysis in Section~\ref{sec:Theoretical-Discussions}
relies on certain fundamental properties of sub-Gaussian random
matrices. Here, we state some basic definitions that we will need. See
\cite{vershynin2010introduction} for a thorough exposition
on the non-asymptotic theory of sub-Gaussian random matrices. In what
follows, we stick to the notation of \cite{vershynin2010introduction}.

\begin{defn} A random variable $\x$ is said to be a sub-Gaussian random
  variable if it satisfies 
$$\big(\mathbb{E}|\x|^q\big)^{1/q}\le K \sqrt{q}$$
for all $q \ge 1$. Its sub-Gaussian norm $\|\x\|_{\psi_2}$ is defined
as 
$$\|\x\|_{\psi_2}=\sup_{q\ge 1}q^{-1/2}\big(\mathbb{E}|\x|^q\big)^{1/q}.$$
\end{defn}

Next we define the following two important classes of random vectors. 

\begin{defn} Let $\x$ be a random vector in $\R^n$.
\begin{enumerate}[(i)]
\item $\x$ is said to be \emph{isotropic} if $\mathbb{E}\langle
  \x,u\rangle^2 = \|u\|^2$ for all $u\in \R^n$.
\item $\x$ is sub-Gaussian if the marginals $\langle \x,u\rangle$ are
sub-Gaussian random variables for all $u\in \R^n$. The sub-Gaussian
norm of the random vector $\x$ is defined by
$$\|\x\|_{\psi_2}:=\sup_{u\in S^{n-1}} \|\langle \x,u\rangle\|_{\psi_2}.$$

\end{enumerate}
\end{defn}


\section{Assumptions and the Algorithm}\label{alg}
We now go back to the sparse estimation problem \eqref{prob}, i.e., we
want to estimate $\beta$ from $y=X\beta+u$ when given the matrix $X$
and the fact that $\|u\|_2\eta$.  For the purpose of this paper, we
assume that $||X_{i}||_{2}=1$ $\forall i\in[p]$. This simply can be
done by normalizing the columns and rescaling $\beta$.  Below, $C_z$
denotes a spherical cap, i.e., a portion of $S^{n-1}$ cut off by some
hyperplane $P$, with ``centroid'' $z$, which is the point on the cap
that has maximum distance from $P$.

We make the following additional assumptions.

\begin{assumption}\label{a0} The data matrix $X$ is a realization of a
  random matrix $\X$. 
\end{assumption}

\begin{assumption} \label{a1} There exist
  $z_{1},\dots,z_{q}\in S^{n-1}$, $q\ll n$,
such that $\forall i\in[p],\;\exists j\in[q]$ s.t $\mathbf{X_{i}}$
is uniformly distributed on $C_{z_{j}}$. That is, $\mathbf{X_{i}}$ is
distributed according to the measure $m_{C_{z_{j}}}$, where  for all
measurable subsets $A$ of $C_{z_{j}}$,
$$m_{C_{z_{j}}}(A):=\lambda^{n}(W(A))/\lambda^{n}(W(C_{z_{j}})).
$$
In this case, we write $\mathbf{X}_{i}\sim C_{z_{j}}$. 
\end{assumption}

\begin{assumption} \label{a2} For $i\neq j$, $\mathbf{X_{i}}$ and
  $\mathbf{X_{j}}$ are independent.
\end{assumption}
Clearly, by Assumption~\ref{a1}, $X_{i}$ can be clustered into $q$ groups
$G_1,\dots,G_q$ where 
\[
G_{j}=\{i\in[p]\; s.t.\;\mathbf{X}_{i}\sim C_{z_{j}}\}
\]
\begin{algorithm}[*tb] 
\protect\protect\caption{Clustering Removal Algorithm (CRA) \label{alg1}}

\begin{description}
\item[Step 1:] \ Estimate $G_{1},G_{2},...,G_{q}$.
\smallskip 
\item[Step 2:] \ Estimate $R:=[z_{1}|z_{2}|...|z_{q}]$ by setting
$\hat{{z_{i}}}={\displaystyle \frac{1}{|G_{i}|}\sum_{j\in
    G_{i}}X_{j}}$. 
\smallskip
\item[Step 3:] \ Use a sparse recovery method for obtaining an
  estimate $\hat{\gamma}$ for $\gamma$ from 
\begin{equation}
\tilde{y}=\tilde{X}\gamma+\tilde{u}\label{secondary}
\end{equation}
where $\tilde{y}=\Pi_{R^{\perp}}y$,
$\tilde{X}=\Pi_{R^{\perp}}XN_{\Pi_{R^{\perp}}X}^{-1}$,
$\tilde{u}=\Pi_{R^{\perp}}u$.
\smallskip
\item[Step 4:] \ The estimated $\beta$ will be
  $\hat{{\beta}}:=N_{\Pi_{R^{\perp}}X}^{-1}\hat{\gamma}$
\end{description}
\end{algorithm}

Under these assumptions, we propose the clustering removal algorithm
(CRA), described in Algorithm~\ref{alg1}, for estimating $\beta$ in
\eqref{prob}. In the first step of CRA, we estimate $G_j$, $j\in [q]$
by clustering the columns of $X$ using an appropriate clustering
algorithm. For the numerical simulations in this paper, we use
$k$-means clustering. Note that here we make the additional assumptions
that we know $q$ and it is possible to estimate $G_j$, which, for
example, requires that the spherical caps $C_{z_j}$ are well
separated. In step 2, we estimate the centroid $z_j$ of each cluster
$G_j$ by averaging the columns of $X$ that belong to $G_j$. Note that
this is an unbiased estimate for $z_j$ assuming the clusters have been
accurately identified. In step 3, we project each column of $X$ onto
the orthogonal complement of the $q$-dimensional subspace spanned by
the cluster centroids (without loss of generality, we assume that the
set of cluster centroids $\{z_1,\dots,z_q\}$ is linearly independent).
%
We normalize the projected columns by multiplying the matrix with a diagonal matrix $N_{\Pi_{R^{\perp}}X}^{-1}$ where $(N_{\Pi_{R^{\perp}}X})_{i,i}=||\Pi_{R^{\perp}}X_i||_2$. After normalization, we obtain our modified matrix
$\widetilde{X}:=\Pi_{R^{\perp}}XN_{\Pi_{R^{\perp}}X}^{-1}$ which, as
we will show in the next section, is an appropriate measurement matrix
for sparse recovery. The main idea here is that the columns of
$\widetilde{X}$ lie on the $(n-q)$-dimensional subspace
$\mathcal{R}(R)^\perp$ and, in fact, we show below that
$\widetilde{X}$ is a realization of $\widetilde{\X}:=\Pi_{R^{\perp}}\X
N_{\Pi_{R^{\perp}}\X}^{-1}$ where
$\widetilde{\mathbf{X}_i}$ are independent and uniformly distributed
on $S^{n-1}\cap \mathcal{R}(R)^\perp$. Accordingly, $\widetilde{X}$
turns out to be a good compressive sampling matrix, as explained in
the next section, and we use a sparse recovery algorithm of our choice
to obtain $\hat{\gamma}$, an estimate of $\gamma$, from
\eqref{secondary}. In the final step, we ``unnormalize'' by
multiplying $\hat{\gamma}$ with the diagonal matrix
$N_{\Pi_{R^{\perp}}X}^{-1}$.


\section{Theoretical Recovery Guarantees for CRA}  \label{sec:Theoretical-Discussions}

In this section, we assume that the clusters $G_j$ and their centroids
$z_j$, $j=1,\dots,q$ are known (or accurately estimated). Under this
assumption, we show that $\widetilde{\mathbf{X}}$, with high
probability, is an appropriate compressive sampling matrix. The
following two theorems from \cite{vershynin2010introduction} will be
instrumental in our theoretical analysis of CRA.
\begin{thm}\label{th11}
  \cite[Theorem 5.65]{vershynin2010introduction} Let $\mathbf{Z}$ be
  an $n\times p$ random matrix with $n< p$ and
  $||\mathbf{Z_{i}}||_{2}=1$ for all $i\in[p]$. Let $k\in[p]$ and
  $\delta\in(0,1).$ Suppose that
\begin{enumerate}[(i)]
\item the columns of $\mathbf{Z}$ are independent, and 
\item the columns of $\sqrt{n}\mathbf{Z}$ are sub-Gaussian and isotropic.
\end{enumerate}
Then, there exists positive constants $c$ and $C$, depending only on
the maximum sub-Gaussian norm $K:=\max_{i}\|\mathbf{X}_i\|_{\psi_2}$
of the columns of $\mathbf{Z}$, such that

\begin{equation} n\geq C\delta^{-2}k\log(\frac{ep}{k}) \implies 
\delta_{k}(\mathbf{Z})\leq\delta
\end{equation} 
with probability at least
$1-2\exp(-c\delta^{2}n)$.
\end{thm}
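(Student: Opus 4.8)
The plan is to run the standard two-layer argument for restricted isometry: first establish a restricted isometry bound on each fixed coordinate subset $T\subset[p]$ with $|T|=k$, and then take a union bound over all $\binom{p}{k}$ such subsets, using the hypothesis $n\geq C\delta^{-2}k\log(ep/k)$ to overcome the entropy of the union. (This is essentially Theorem 5.65 of \cite{vershynin2010introduction}, so one may also simply cite it; below I sketch the argument.)

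\emph{Fixed support.} Fix $T$ with $|T|=k$ and consider the $n\times k$ submatrix $\mathbf{Z}_T$. For $z\in\R^k$ one has $\mathbf{Z}_Tz=\sum_{i\in T}z_i\mathbf{Z}_i$, a sum of independent random vectors in $\R^n$; equivalently, the columns of $\sqrt{n}\,\mathbf{Z}_T$ are independent, isotropic, and sub-Gaussian with $\psi_2$-norm at most $K$. The extreme-singular-value estimate for matrices with independent isotropic sub-Gaussian columns then gives, for every $t\geq 0$,
\[
\sqrt{n}-C_1\sqrt{k}-t\ \leq\ s_{\min}\bigl(\sqrt{n}\,\mathbf{Z}_T\bigr)\ \leq\ s_{\max}\bigl(\sqrt{n}\,\mathbf{Z}_T\bigr)\ \leq\ \sqrt{n}+C_1\sqrt{k}+t
\]
with probability at least $1-2\exp(-c_1t^2)$, where $C_1,c_1$ depend only on $K$. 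Dividing by $\sqrt{n}$, all singular values of $\mathbf{Z}_T$ lie within $a:=C_1\sqrt{k/n}+t/\sqrt{n}$ of $1$, so $1-s_{\min}^2\leq 2a$ and $s_{\max}^2-1\leq 3a$; hence $(1-3a)\|z\|_2^2\leq\|\mathbf{Z}_Tz\|_2^2\leq(1+3a)\|z\|_2^2$ for all $z\in\R^k$, which is exactly the order-$k$ RIP condition restricted to $T$ with constant $3a$. Choosing $t=\delta\sqrt{n}/6$ makes $3t/\sqrt{n}=\delta/2$, and the hypothesis $n\geq C\delta^{-2}k\log(ep/k)\geq C\delta^{-2}k$ (since $\log(ep/k)\geq1$) forces $3C_1\sqrt{k/n}\leq\delta/2$ once $C\geq 36C_1^2$; so the RIP bound on $T$ holds with constant $\leq\delta$ with probability at least $1-2\exp(-c_2\delta^2 n)$, $c_2:=c_1/36$.

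\emph{Union bound.} There are $\binom{p}{k}\leq(ep/k)^k=\exp\bigl(k\log(ep/k)\bigr)$ choices of $T$, so the probability that \emph{some} $T$ violates its RIP bound is at most $2\exp\bigl(k\log(ep/k)-c_2\delta^2 n\bigr)$. Enlarging $C$ so that $c_2C\geq2$, the hypothesis gives $c_2\delta^2 n\geq 2k\log(ep/k)$, hence this probability is at most $2\exp(-c_2\delta^2 n/2)$. On the complementary event, $(1-\delta)\|z\|_2^2\leq\|\mathbf{Z}_Tz\|_2^2\leq(1+\delta)\|z\|_2^2$ for every $T$ of size $k$ and every $z$, i.e. $\delta_k(\mathbf{Z})\leq\delta$, which is the claim with $c:=c_2/2$.

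\emph{Main obstacle.} The only nonroutine ingredient is the extreme-singular-value bound for $\sqrt{n}\,\mathbf{Z}_T$ invoked in the first step: its proof is an $\varepsilon$-net argument on $S^{k-1}$ combined with a Bernstein-type tail bound for $\|\mathbf{Z}_Tz\|_2^2$ viewed as a diagonal term $\sum_{i\in T}z_i^2\|\mathbf{Z}_i\|_2^2$ plus off-diagonal fluctuations that are sums of independent sub-exponential variables controlled by $K$. A secondary subtlety is that the columns $\mathbf{Z}_i$ are not themselves isotropic (they have unit norm); it is $\sqrt{n}\,\mathbf{Z}_i$ that is isotropic, and such a vector can have nonzero mean of Euclidean norm up to $1/\sqrt{n}$, so the cross terms $\sum_{i\neq j}z_iz_j\langle\mathbb{E}\mathbf{Z}_i,\mathbb{E}\mathbf{Z}_j\rangle$ contribute $O(k/n)$ to $\mathbb{E}\|\mathbf{Z}_Tz\|_2^2$ — the same order as the tolerated deviation — and must be checked to be absorbable into $\delta$ under the stated scaling of $n$.
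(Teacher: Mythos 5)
The paper gives no proof of this statement: it is quoted verbatim as Theorem 5.65 of \cite{vershynin2010introduction}, so the ``paper's own proof'' is simply the citation. Your sketch correctly reproduces the standard argument from that source (per-subset extreme-singular-value bounds for independent isotropic sub-Gaussian columns, followed by a union bound over the $\binom{p}{k}$ supports absorbed by the $k\log(ep/k)$ term), so it is consistent with the cited proof and contains no gap.
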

\begin{thm}\label{th22}
\cite{vershynin2010introduction} Let $\mathbf{z}$ be
an $n\times1$ random vector. If $\mathbf{z}$ is uniformly distributed
on $\sqrt{n}S^{n-1},$ then $\mathbf{z}$ is both sub-Gaussian
and isotropic. Furthermore, the sub-Gaussian norm of $\mathbf{z}$ is
bounded by a universal constant.
\end{thm}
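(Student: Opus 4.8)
The plan is to establish the two assertions — isotropy, and the sub-Gaussian property with a \emph{universal} constant — separately, in each case using only the rotational invariance of the uniform measure on $\sqrt{n}S^{n-1}$ together with an elementary concentration estimate for spherical caps. For isotropy, I would first observe that the covariance $\Sigma := \mathbb{E}[\mathbf{z}\mathbf{z}^{\top}]$ satisfies $Q\Sigma Q^{\top} = \Sigma$ for every orthogonal $Q$, which forces $\Sigma = cI_n$ for some scalar $c \geq 0$. Taking traces and using that $\|\mathbf{z}\|_2 = \sqrt{n}$ almost surely gives $cn = \operatorname{tr}\Sigma = \mathbb{E}\|\mathbf{z}\|_2^2 = n$, hence $c = 1$ and $\mathbb{E}\langle\mathbf{z},u\rangle^2 = u^{\top}\Sigma u = \|u\|_2^2$ for all $u \in \R^n$, which is exactly isotropy.

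For the sub-Gaussian bound, by rotational invariance it suffices to control $\|\mathbf{z}_1\|_{\psi_2}$, the $\psi_2$-norm of the first coordinate: for any $u \in S^{n-1}$ one rotates $u$ to $e_1$ by an orthogonal $Q$, and $\langle\mathbf{z},u\rangle$ has the same law as the first coordinate of $Q\mathbf{z}$, which is again uniform on $\sqrt{n}S^{n-1}$. Writing $\mathbf{z} = \sqrt{n}\,\mathbf{w}$ with $\mathbf{w}$ uniform on $S^{n-1}$, I would invoke the classical cap estimate $\sigma^{n-1}(\{x \in S^{n-1}: x_1 \geq t\}) \leq e^{-nt^2/2}$ (the cap being empty once $t \geq 1$), which gives $\mathbb{P}(|\mathbf{z}_1| \geq s) = \mathbb{P}(|\mathbf{w}_1| \geq s/\sqrt{n}) \leq 2e^{-s^2/2}$ for every $s \geq 0$ — a sub-Gaussian tail with a universal constant. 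Converting this tail bound to a moment bound $(\mathbb{E}|\mathbf{z}_1|^q)^{1/q} \leq C\sqrt{q}$ (valid for all $q \geq 1$, with $C$ universal) by the usual layer-cake integration then yields $\|\mathbf{z}\|_{\psi_2} = \sup_{u\in S^{n-1}}\|\langle\mathbf{z},u\rangle\|_{\psi_2} = \|\mathbf{z}_1\|_{\psi_2} \leq C$.

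I expect the cap estimate $\sigma^{n-1}(\{x_1 \geq t\}) \leq e^{-nt^2/2}$ to be the only genuinely nontrivial ingredient; it can be derived by directly bounding the surface area of the cap, or recognized as a special case of Lévy's isoperimetric inequality on the sphere — everything else is symmetry or routine integration. A route that avoids the isoperimetry altogether is the Gaussian representation $\mathbf{z} \stackrel{d}{=} \sqrt{n}\,g/\|g\|_2$ with $g \sim \mathcal{N}(0,I_n)$: then $\mathbf{z}_1 = g_1\cdot(\sqrt{n}/\|g\|_2)$ where $g_1$ is standard Gaussian and $\|g\|_2^2$ concentrates around $n$, and the only delicate point becomes controlling the contribution of the small-probability event $\{\|g\|_2^2 \le n/2\}$ to the moments of $\mathbf{z}_1$ via a lower-tail bound for the $\chi^2_n$ variable $\|g\|_2^2$. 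Either way, the isotropy part and the symmetry reductions are immediate, and essentially all of the work concentrates in this one concentration estimate.
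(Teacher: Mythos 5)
Your argument is correct. Note, though, that the paper does not prove this statement at all -- it is imported verbatim as a citation to Vershynin's lecture notes -- so there is no in-paper proof to compare against; your reduction by rotational invariance to the first coordinate, the spherical-cap bound $\sigma^{n-1}(\{x_1\ge t\})\le e^{-nt^2/2}$, and the trace computation for isotropy together constitute essentially the standard proof given in that reference (where the tail bound is obtained from concentration of measure on the sphere, equivalently the L\'evy isoperimetric route you mention).
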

Next, we will prove that the columns of $\widetilde{\mathbf{X}_i}$
satisfy the hypotheses of Theorem~\ref{th11}. To that end, for
$z\in\mathbb{R}^{n}$ define
$$\omega(z) :=\begin{cases}
0_{n\times 1} & \text{if $\Pi_{R^{\perp}}z=0$,} \\
\frac{\Pi_{R^{\perp}}z}{\left \| \Pi_{R^{\perp}}z\right\| _{2}} &
\text{otherwise.}
\end{cases}
$$
Note that $\widetilde{\mathbf{X}_i}=\omega(\mathbf{X}_i)$.

\begin{thm}
  If Assumption~\ref{a1} and Assumption~\ref{a2} hold,
  $\omega(\mathbf{X}_{i})$ is uniformly distributed on $S^{n-1}\cap
  \mathcal{R}(R)^\perp$ for each $i\in[n]$. Moreover, for $i\neq j,$
  $\omega(\mathbf{X}_{i})$ and $\omega(\mathbf{X}_{j})$ are
  independent.\label{thmmain}\end{thm}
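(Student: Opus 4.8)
The independence claim is the easy part: since $\omega$ is a deterministic measurable map, $\omega(\mathbf{X}_i)$ is a function of $\mathbf{X}_i$ alone, so independence of the $\mathbf{X}_i$ (Assumption~\ref{a2}) immediately gives independence of the $\omega(\mathbf{X}_i)$. The real content is the distributional statement, and the plan is to exploit the rotational symmetry inherent in both the spherical-cap measure $m_{C_{z_j}}$ and the target measure $\sigma^{n-2}$ on $S^{n-1}\cap\mathcal{R}(R)^\perp$. Fix $i$, let $j$ be the index with $\mathbf{X}_i\sim C_{z_j}$, and write $z:=z_j$, $C:=C_{z_j}$. Let $V:=\mathcal{R}(R)^\perp$, an $(n-q)$-dimensional subspace, and let $S_V:=S^{n-1}\cap V$ be the sphere we are projecting onto; note $\omega(\mathbf{X}_i)$ is supported on $S_V$ since $\Pi_{R^\perp}$ maps into $V$.

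The key observation is that $\omega$ intertwines orthogonal transformations that fix $V$: if $Q\in O(n)$ satisfies $Qv=v$ for all $v\in V^\perp=\mathcal{R}(R)$ — equivalently $Q$ acts as the identity on $\mathcal{R}(R)$ and arbitrarily (orthogonally) on $V$ — then $\Pi_{R^\perp}Q = Q\,\Pi_{R^\perp}$, hence $\omega(Qz')=Q\,\omega(z')$ for every $z'$. Let $H$ denote this subgroup of $O(n)$, which acts on $V$ as the full orthogonal group $O(V)\cong O(n-q)$. Since $\sigma^{n-2}$ on $S_V$ is the unique $O(V)$-invariant probability measure, it suffices to show that the pushforward $\mu:=\omega_\#(m_C)$ is invariant under $H$. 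For $Q\in H$ and a measurable $A\subseteq S_V$, we have $\mu(Q^{-1}A)=m_C(\omega^{-1}(Q^{-1}A))=m_C(Q^{-1}\omega^{-1}(A))$ using the intertwining relation, so it is enough to know that $m_C$ itself is invariant under $H$: $m_C(Q^{-1}B)=m_C(B)$ for measurable $B\subseteq C$. This in turn follows because the cap $C_z$ is cut off by a hyperplane $P$ whose normal is $z\in\mathcal{R}(R)$; any $Q\in H$ fixes $z$ and the hyperplane $P$, hence maps $C_z$ onto itself, and since $m_C$ is defined via the rotation-invariant Lebesgue measure of the wedge $W(\cdot)$ (which every $Q\in O(n)$ preserves), $m_C$ is $H$-invariant. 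Chaining these gives $\mu(Q^{-1}A)=\mu(A)$, so $\mu$ is the uniform measure $\sigma^{n-2}$ on $S_V$.

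One technical point to handle carefully: $\omega$ is only partially defined in a nice way — it sends the measure-zero set $\{z':\Pi_{R^\perp}z'=0\}$ to $0_{n\times1}\notin S_V$. We should argue this set has $m_C$-measure zero (it is $C\cap\mathcal{R}(R)$, a sphere of dimension $q-1<n-1$ intersected with the cap, hence null for the $(n-1)$-dimensional cap measure, since $q\ll n$ guarantees $q-1\le n-2$), so it contributes nothing to the pushforward and the displayed identity holds $m_C$-almost everywhere; alternatively, redefine $\omega$ arbitrarily on that null set without affecting the law. The main obstacle — really the only non-bookkeeping step — is verifying the rotation-invariance of $m_{C_z}$ under $H$ from its wedge-based definition, i.e., confirming that $\lambda^n(W(Q^{-1}B)) = \lambda^n(W(B))$ for $Q\in H$; this is immediate once one notes $W(Q^{-1}B)=Q^{-1}W(B)$ and $\lambda^n$ is $O(n)$-invariant, but it must be stated since the cap measure is normalized by $\lambda^n(W(C_z))$ and we need $Q$ to preserve $C_z$ (not just $S^{n-1}$), which uses that the defining hyperplane's normal lies in the fixed subspace $\mathcal{R}(R)$.
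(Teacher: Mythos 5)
Your proof is correct, but it takes a genuinely different route from the paper's. The paper argues by direct computation: it writes $\mathbb{P}(\omega(\mathbf{X}_i)\in\Omega)$ as an integral of the normalized indicator of $W(C_{R_1})$ over the preimage set $\Pi^{-1}(\Omega)$, disposes of the event $\{\Pi_{R^\perp}\mathbf{X}_i=0\}$ exactly as you do, and then applies Fubini--Tonelli and polar coordinates in $\mathcal{R}(R)^\perp$ to show that the inner integral is independent of the direction $u\in S^{n-1}\cap\mathcal{R}(R)^\perp$, whence the law is a constant times $\sigma^{n-q-1}$. You instead identify the subgroup $H\subset O(n)$ fixing $\mathcal{R}(R)$ pointwise, verify the intertwining $\omega\circ Q=Q\circ\omega$ and the $H$-invariance of $m_{C_{z_j}}$, and invoke uniqueness of the rotation-invariant probability measure on the sphere. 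Both arguments ultimately rest on the same geometric fact --- the cap's axis $z_j$ is a column of $R$, so rotations of $\mathcal{R}(R)^\perp$ that fix $\mathcal{R}(R)$ preserve the cap --- but your packaging is more conceptual: it avoids the Fubini/polar-coordinate bookkeeping entirely and generalizes verbatim to any column distribution that is $H$-invariant (e.g., any density on $S^{n-1}$ depending only on the $\mathcal{R}(R)$-component), whereas the paper's computation is more elementary and self-contained, requiring only Fubini--Tonelli rather than the uniqueness of the invariant measure (a fact the paper does, however, already cite in its notation section, so you are not importing anything foreign). Two small points: the uniform measure on $S^{n-1}\cap\mathcal{R}(R)^\perp$ is $\sigma^{n-q-1}$ in the paper's notation, not $\sigma^{n-2}$ (an $(n-q-1)$-dimensional sphere); and your null-set argument should be phrased as $\lambda^n(W(C\cap\mathcal{R}(R)))=0$ because $W(C\cap\mathcal{R}(R))$ sits inside the $q$-dimensional subspace $\mathcal{R}(R)$ with $q<n$, which is exactly the paper's justification. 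Neither affects the validity of your argument.
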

\begin{proof}
  Since $\mathbf{X}_{i}$ and $\mathbf{X}_{j}$ are independent, and
  $\omega:\R^n\mapsto\R^n$ is measurable, $\omega(\mathbf{X}_{i})$ and
  $\omega(\mathbf{X}_{j})$ are also independent. Next we show that
  $\omega(\mathbf{X}_{i})$ is uniformly distributed on $S^{n-1}\cap
  \mathcal{R}(R)^\perp$. To that end, let $\mathbf{X}_{i}\in
  C_{R_{1}}$ and let $P$ be the hyperplane that generates $C_{R_{1}}.$
  Then, $R_1$ is a normal vector of $P$ and accordingly we have
  $$P=\{x\in\mathbb{R}^{n}:\langle x-q,  R_{1}\rangle=0\}$$
  where $q$ is an arbitrary point on $P$ that we fix. Recall, on the
  other hand, that 
\begin{equation}
C_{R_1}=\{x\in S^{n-1}: \langle x,R_{1}\rangle \geq \langle q,R_{1}\rangle\}\label{character}
\end{equation}
where $q\in P$ is as above. Note that $\langle q,R_{1} \rangle$ is independent
of the choice of $q$.

Now, let $\Omega\subset S^{n-1}\cap \mathcal{R}(R)^\perp$. Noting that
$||\mathbf{X_{i}}||_{2}=1$, we observe that 
\begin{align}
\mathbb{P}(\omega(\mathbf{X_{i}})\in\Omega)&=\mathbb{P}(\Pi_{R^{\perp}}\mathbf{X_{i}}\in W(\Omega)\setminus \{\mathbf{0}\}) \\
&= \mathbb{P}(\Pi_{R^{\perp}}\mathbf{X_{i}}\in W(\Omega))-\mathbb{P}(\Pi_{R^{\perp}}\mathbf{X_{i}}=\mathbf{0})\label{eq:second_prob}
\end{align}
since $\mathbf{0}\in W(\Omega)$. We know that any vector
$z\in\mathbb{R}^{n}$ can be uniquely written as $z_{R}+z_{R^{\perp}}$
where $z_{R}\in \mathcal{R}(R)$ and $z_{R^{\perp}}\in
\mathcal{R}(R)^{\perp}$. Note that
$\Pi_{R^{\perp}}z=\Pi_{R^{\perp}}z_{R}+\Pi_{R^{\perp}}z_{R^{\perp}}=z_{R^{\perp}}$.
Therefore, $\Pi_{R^{\perp}}z\in W(\Omega)$ if and only if
$z_{R^{\perp}}\in W(\Omega)$ which is holds if and only if $z$ is an
element of
$$
\Pi^{-1}(\Omega):=\{ru+Rv:\ r\in[0,1],\; u\in\Omega,\;
v\in\mathbb{R}^{q}\}.
$$  
Hence, keeping in mind that
$\mathbf{X}_i \in C_{R_1}$, we have
$$\mathbb{P}(\Pi_{R^{\perp}}\mathbf{X_{i}}\in W(\Omega)) =
\mathbb{P}(\mathbf{X_{i}}\in\Pi^{-1}(\Omega)\cap C_{R_{1}})$$
and
$$\mathbb{P}(\Pi_{R^{\perp}}\mathbf{X_{i}}=\mathbf{0}) =
\mathbb{P}(\mathbf{X_{i}\in}\mathcal{R}(R)\cap C_{R_{1}}).
$$
Furthermore, since the distribution of $\mathbf{X_{i}}$ is continuous,
and $\mathcal{R}(R)$ has a lower dimension compared to $C_{R_{1}}$,
\mbox{$\mathbb{P}(\mathbf{X_{i}\in}C_{R_{1}}\cap
\mathcal{R}(R))=0$}. Accordingly, it follows from
(\ref{eq:second_prob}) that
\begin{equation}\label{sec_prob}
\mathbb{P}(\omega(\mathbf{X_{i}})\in\Omega)=\mathbb{P}(\mathbf{X_{i}}\in\Pi^{-1}(\Omega)\cap
C_{R_{1}}).
\end{equation}

Next, recall that $\mathbf{X}_i$ is uniformly distributed on
$C_{R_1}$, i.e., $\mathbf{X}_i$ is distributed according to the
measure $m_{C_{R_{1}}}$, defined as in Assumption~\ref{a1}. Therefore,
we can rewrite \eqref{sec_prob} as
\begin{align}
\mathbb{P}(\omega(\mathbf{X_{i}})\in\Omega)&=\int\limits_{\Pi^{-1}(\Omega)\cap
  C_{R_{1}}} d m_{C_{R_{1}}}
\notag \\
&= \int\limits_{W(\Pi^{-1}(\Omega)\cap
  C_{R_{1}})} \frac{d\lambda^n(y)}{\lambda^n(W(C_{R_1}))} \notag \\
&= \int\limits_{\Pi^{-1}(\Omega)\cap
  W(C_{R_{1}})} \frac{d\lambda^n(y)}{\lambda^n(W(C_{R_1}))} \notag \\
&=\int\limits_{\Pi^{-1}(\Omega))}\one_{W(C_{R_1})}(y)\dfrac{1}{\lambda^{n}(W(C_{R_{1}}))}d\lambda^{n}(y) \label{Int_1}
\end{align}
where $\one_{W(C_{R_1})}$ is the indicator function of the set
$W(C_{R_{1}})$, i.e., 
$$\one_{W(C_{R_1})}(y):=\begin{cases}
1 & \mbox{if \ensuremath{y\in W(C_{R_{1}})}}\\
0 & \text{otherwise}
\end{cases}.
$$
Above, the second equality follows from the definition of the measure
$m_{C_{R_{1}}}$ and the third equality holds because
$W(\Pi^{-1}(\Omega)\cap C_{R_{1}})=\Pi^{-1}(\Omega)\cap W(C_{R_{1}})$,
which is easy to verify.

Next, we decompose $\Pi^{-1}(\Omega)=W(\Omega) \times \mathcal{R}(R)$ and
denote by $\lambda^q$ and $\lambda^{n-q}$ the Lebesgue measure on
$\mathcal{R}(R)$ and on $\mathcal{R}(R)^\perp$ respectively. Note that
$W(\Omega)$ and $\mathcal{R}(R)$, and thus every $y\in
\Pi^{-1}(\Omega)$ can be uniquely decomposed as $y=y_{R^\perp} +y_R$
such that $y_{R^\perp}\in W(\Omega)$ and $y_R \in \mathcal{R}(R)$. 
Then,
we obtain
\begin{align}
\mathbb{P}(\omega(\mathbf{X_{i}})\in\Omega) \hskip -1.9cm & \notag \\
&=\int\limits_{W(\Omega)}\left[\int\limits_{\mathcal{R}(R)}\frac{\one_{W(C_{R_1})}(y_{R^\perp}+y_R)}{\lambda^{n}(W(C_{R_{1}}))}\;
  d\lambda^{q}(y_R)\right]\;{ {
    d\lambda^{n-q}}(y_{R^\perp})} \notag \\
&={\scriptstyle
  \int\limits_{\Omega}\int\limits_{[0,1]}\left[\int\limits_{\mathcal{R}(R)}\frac{\one_{W(C_{R_1})}(ru+y_R)}{\lambda^{n}(W(C_{R_{1}}))}d\lambda^{q}(y_R)\right]r^{{\scriptstyle
      n-q-1}}dr\;{\scriptstyle d\sigma^{n-q-1}(u)}} \label{ind}
\end{align}
Above, first equality follows from Fubini-Tonelli's theorem, and the
second equality is obtained by passing to polar coordinates and using
Fubini-Tonelli's theorem one more time. Finally, $\sigma^{n-q-1}$ is the
unique spherical measure on $S^{n-1}\cap \mathcal{R}(R)^\perp$, i.e.,
the unit sphere of the subspace $\mathcal{R}(R)^\perp$. 

Next, we observe that $\one_{W(C_{R_1})}(ru+y_R)=1$ if and only if one
of the following two statements hold: 
\begin{enumerate}[(i)]
\item $ru+y_R=0$, which, since $u$
is orthogonal to $y_R$, is
equivalent to having $r^2+\|y_R\|_2^2=0$, or 
\item $0<r^2+\|y_R\|_2^2\le 1$ and 
$$\frac{ru+v}{||ru+v||_{2}}\in C_{R_{1}}$$
which, by \eqref{character}, holds if and only if 
$$\langle y_R,R_{1}\rangle \geq \big
(r^2+\|y_R\|_2^2\big)^{1/2}\langle q,R_{1}\rangle$$
where $q$ is as in \eqref{character}.
\end{enumerate}
Accordingly, we conclude that the innermost integral in \eqref{ind}
does not depend on $u$, which implies that 
\begin{equation}\label{eqinner}
\int\limits_{[0,1]}\left[\int\limits_{\mathcal{R}(R)}\frac{\one_{W(C_{R_1})}(ru+y_R)}{\lambda^{n}(W(C_{R_{1}}))}d\lambda^{q}(y_R)\right]r^{{\scriptstyle
      n-q-1}}dr = c,
\end{equation}
where $c$ is a constant that only depends on $n$. Substituting
\eqref{eqinner} into \eqref{ind}, we obtain
$$
\mathbb{P}(\omega(\mathbf{X_{i}})\in\Omega)=c \sigma^{n-q-1}(\Omega).
$$
Therefore, we conclude that
$\omega(\mathbf{X_{i}})$ is uniformly distributed on $S^{n-1}\cap \mathcal{R}(R)^\perp$.\end{proof}
\smallskip

Next, we will show that $\mathbf{\tilde{X}}$, with high probability,
is an appropriate compressive sampling matrix, and thus we can use the
proposed clustering removal algorithm (CRA) to obtain an accurate
estimate of $\beta$ in \eqref{prob}. To that end, let \mbox{$U^T:\R^n\mapsto
\R^n$} be a
unitary transformation such that $U^T(\mathcal{R}(R)^\perp) =
\text{span}\{e_1,\dots,e_{n-q}\}$ where $e_i$ are the standard
basis vectors. Then we can write
$$U^T\widetilde{\mathbf{X}}=
\begin{bmatrix}
  \textbf{A}\\
  0_{q\times p}
\end{bmatrix}$$ where $\textbf{A}$ is a $(n-q)\times p$ random
matrix. Due to the rotation invariance of the Lebesgue measure, and since
$U$ is a deterministic matrix, the columns of $U^T\widetilde{\textbf{X}}$
are independently uniformly distributed over 
$$U^T (S^{n-1}\cap \mathcal{R}(R)^\perp)=\{x\in S^{n-1}: \ 
x_{n-q+1}=...=x_n=0\}.$$ This means that the columns of $\mathbf{A}$
are uniformly distributed over $S^{n-q-1}$. Theorem~\ref{th11} and
Theorem~\ref{th22} provide sufficient conditions for $\textbf{A}$ to
satisfy RIP with overwhelming probability. More precisely, the
following holds. 

\begin{cor}\label{th:RIP}
If $(n-q)\geq C\delta^{-2}k\log(\frac{ep}{k})$, then with probability
at least $1-2\exp(-c\delta^{2}(n-q))$, \textup{$\delta_{k}(\textbf{A})\leq\delta$\label{cor:obsnumber}.} 
\end{cor}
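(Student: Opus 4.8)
The plan is to verify that the hypotheses of Theorem~\ref{th11} hold for the matrix $\mathbf{A}$ with $n$ replaced by $n-q$, and then read off the conclusion directly. The preceding discussion has already done most of the work: by Theorem~\ref{thmmain} the columns $\widetilde{\mathbf{X}}_i = \omega(\mathbf{X}_i)$ are independent and uniformly distributed on $S^{n-1}\cap\mathcal{R}(R)^\perp$, and after applying the deterministic unitary $U^T$ the columns of $\mathbf{A}$ are independent and uniformly distributed on $S^{n-q-1}$ (the unit sphere of $\mathbb{R}^{n-q}$). The only remaining point is to recast this so that the two bulleted hypotheses of Theorem~\ref{th11} are literally satisfied.

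First I would note that each column $\mathbf{A}_i$ satisfies $\|\mathbf{A}_i\|_2 = 1$ since it lives on $S^{n-q-1}$, and the columns are independent; this gives hypothesis (i) and the normalization requirement. Next, I would rescale: $\sqrt{n-q}\,\mathbf{A}_i$ is uniformly distributed on $\sqrt{n-q}\,S^{n-q-1}$, so Theorem~\ref{th22} (applied in dimension $n-q$ rather than $n$) tells us that $\sqrt{n-q}\,\mathbf{A}_i$ is sub-Gaussian and isotropic, with sub-Gaussian norm bounded by a universal constant $K$. This is exactly hypothesis (ii) of Theorem~\ref{th11}. Then I would invoke Theorem~\ref{th11} with the ambient dimension taken to be $n-q$, the number of columns $p$, sparsity level $k$, and the given $\delta\in(0,1)$: the hypothesis $(n-q)\geq C\delta^{-2}k\log(ep/k)$ is precisely the stated assumption, and the conclusion $\delta_k(\mathbf{A})\leq\delta$ with probability at least $1-2\exp(-c\delta^2(n-q))$ follows, where $c,C$ depend only on the universal constant $K$ from Theorem~\ref{th22} and hence are themselves universal.

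Since this is essentially a bookkeeping argument, there is no serious obstacle; the one thing that deserves a sentence of care is confirming that Theorem~\ref{th22} and Theorem~\ref{th11}, as quoted, apply verbatim in dimension $n-q$ with $n-q < p$ (which holds since $q\ll n\ll p$), so that the universal constants are genuinely independent of the particular subspace $\mathcal{R}(R)^\perp$. The proof is therefore just the following: the columns of $\mathbf{A}$ are independent, unit-norm, and uniform on $S^{n-q-1}$; by Theorem~\ref{th22} the columns of $\sqrt{n-q}\,\mathbf{A}$ are sub-Gaussian and isotropic with universally bounded sub-Gaussian norm; applying Theorem~\ref{th11} in dimension $n-q$ yields the claim. $\hfill\qed$
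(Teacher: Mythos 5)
Your proposal is correct and follows exactly the route the paper intends: the text preceding the corollary establishes that the columns of $\mathbf{A}$ are independent and uniform on $S^{n-q-1}$, and the corollary is then just Theorem~\ref{th22} plus Theorem~\ref{th11} applied in ambient dimension $n-q$. Your added remarks about the universality of the constant $K$ and the requirement $n-q<p$ are sensible bookkeeping but do not constitute a different argument.
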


Next, noting that $\widetilde{\mathbf{X}} = U \begin{bmatrix}
  \textbf{A}\\
  0_{q\times p}
\end{bmatrix}$, we relate the RIP constants of
$\mathbf{\widetilde{X}}$ with the RIP constants of $\mathbf{A}$.


\begin{prop}\label{th:RIP-transition}
If $A$ satisfies the restricted isometry property with constant $\delta_{k}$, $U\begin{bmatrix}
A\\
0_{q\times p}
\end{bmatrix}$ will also satisfy RIP with the exact same constant.
\end{prop}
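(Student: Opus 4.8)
The plan is to exploit the fact that the map $z\mapsto Uz$ is a linear isometry on all of $\R^n$, so that it preserves Euclidean norms exactly. First I would observe that for any index set $T\subseteq[p]$ with $|T|=k$ and any coefficient vector $w\in\R^{|T|}$, the columns of $M:=U\left[\begin{smallmatrix}A\\ 0_{q\times p}\end{smallmatrix}\right]$ satisfy $M_T w = U\left[\begin{smallmatrix}A_T w\\ 0_{q\times 1}\end{smallmatrix}\right]$, simply because matrix multiplication on the left by $U$ acts column-by-column and the bottom $q$ rows of the block matrix are zero regardless of which columns we select.

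Next I would compute the norm: since $U$ is unitary, $\|U v\|_2 = \|v\|_2$ for every $v\in\R^n$, and in particular
\[
\|M_T w\|_2^2 = \left\|U\begin{bmatrix}A_T w\\ 0_{q\times 1}\end{bmatrix}\right\|_2^2 = \left\|\begin{bmatrix}A_T w\\ 0_{q\times 1}\end{bmatrix}\right\|_2^2 = \|A_T w\|_2^2.
\]
Therefore the two quadratic forms $w\mapsto\|M_T w\|_2^2$ and $w\mapsto\|A_T w\|_2^2$ are literally identical for every $k$-subset $T$. Consequently, the chain of inequalities
\[
(1-\delta_k)\|w\|_2^2 \le \|A_T w\|_2^2 \le (1+\delta_k)\|w\|_2^2
\]
that holds by hypothesis for all such $T$ and $w$ is, term for term, the same chain that must be verified for $M$. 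Since $\delta_k(M)$ is by definition the smallest constant making this hold, and $A$ already achieves $\delta_k$, we get $\delta_k(M)\le\delta_k(A)$; running the argument in reverse (using $U^T M = \left[\begin{smallmatrix}A\\ 0\end{smallmatrix}\right]$, hence $A_T w$ has the same norm as a selection from $U^T M$) gives the opposite inequality, so $\delta_k(M)=\delta_k(A)$, which is the claimed "exact same constant."

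There is essentially no obstacle here — the statement is a direct consequence of unitary invariance of the $\ell_2$ norm together with the block structure that isolates $A$ in the top $n-q$ rows. The only point that warrants a sentence of care is making explicit that appending $q$ zero rows does not change the Euclidean norm of a vector, and that left-multiplication by $U$ commutes with the operation of restricting to a column subset; both are immediate but should be stated so the reader sees why the RIP constant transfers verbatim rather than merely up to a universal factor. Combining this proposition with Corollary~\ref{th:RIP} then yields that $\widetilde{\mathbf{X}}$ satisfies RIP of order $k$ with constant $\delta$ whenever $n-q\ge C\delta^{-2}k\log(ep/k)$, with probability at least $1-2\exp(-c\delta^2(n-q))$.
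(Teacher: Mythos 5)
Your proposal is correct and is essentially the paper's own argument: the paper verifies the identity at the level of Gram matrices, showing $\widetilde{X}_I^T\widetilde{X}_I = A_I^T A_I$ via $U^TU=I$, which is the same computation as your quadratic-form identity $\|M_Tw\|_2^2=\|A_Tw\|_2^2$. Your explicit remark that the argument reverses to give equality (not just an inequality) of the constants is a small but welcome clarification of what the paper leaves implicit.
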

\begin{proof}
First, recall that a matrix $Z\in \R^{n\times p}$ satisfies RIP of order $k$ with
constant $\delta_k$ if and only if  for all $I \subset [p]$ with
$|I|=k$, the eigenvalues of $Z^T_IZ_I$ are all between $1-\delta_k$
and $1+\delta_k$. Here $Z_I$ denotes the submatrix of $Z$ consisting
of its columns indexed by $I$. 

Suppose that $A$ satisfies RIP of order k with constant
$\delta_k$. Set
$$\widetilde{X}=U \begin{bmatrix}
  A\\
  0_{q\times p}
\end{bmatrix}
$$
and let $I$ be any subset of $[p]$ such that $|I|=k$. Then, 
\begin{align}
\widetilde{X}^T_I \widetilde{X}_I&= \Big(U\begin{bmatrix}
A_I\\
0_{q\times k}
\end{bmatrix}\Big)^T\Big(U\begin{bmatrix}
A_I\\
0_{q\times k}
\end{bmatrix}\Big) \notag \\
&= {\begin{bmatrix}
A_I\\
0_{q\times k}
\end{bmatrix}}^TU^TU \begin{bmatrix}
A_I\\
0_{q\times k}
\end{bmatrix}
\notag \\
&=A_I^TA_I. \notag
\end{align}

Therefore, for all $I\subset [p]$, eigenvalues of $\widetilde{X}_I^T\widetilde{X}_I$ and $A_I^TA_I$ are equal. Hence, $\widetilde{X}$ satisfies RIP of order $k$ with constant $\delta_k$ if $A$ satisfies RIP of order $k$ with the same constant. 
\end{proof}

Proposition~\ref{th:RIP-transition} suggests that if the assumptions
of Corollary~\ref{th:RIP} are met, $\widetilde{\textbf{X}}$ satisfies
RIP with overwhelming probability. This, in turn, provides uniform
estimation guarantees for $\gamma$, as defined in
\eqref{secondary}. In particular, combining Corollary~\ref{th:RIP} and
Proposition~\ref{th:RIP-transition}, we arrive at the following
conclusion.
\begin{thm}\label{main2}
Consider Problem~\ref{prob} and suppose that the data matrix $X$
satisfies Assumptions~\ref{a0}-\ref{a2}. 
\begin{enumerate}[(i)]
\item If $(n-q)\ge C (t-1)s
\log(\frac{ep}{ts})$ for some $t\ge 4/3$, then
$\delta_{ts}(\widetilde{X}) \le \sqrt{\frac{t-1}{t}}$ with probability
greater than \mbox{$1-\exp\big(-c \frac{t-1}{t}(n-q)\big)$}. 
\item Consequently, for any $s$-sparse $\gamma\in\mathbb{R}^{p}$, 
$$
||\gamma_{BPDN(\eta)}^{\#}-\gamma||_{2}\leq D\eta+\frac{2C\sigma_{s}(\gamma)_{\ell_1}}{\sqrt{s}}
$$
where $D, C$>0 are constants depending only on $\delta_{ts}$. Here
$\gamma$ is as in \eqref{secondary}.
\item The estimate in (ii) implies the following bound for the estimate of $\beta$: 
$$||\hat{{\beta}}-\beta||_{2}\leq\max_{i}\dfrac{1}{||\Pi_{R^{\perp}}X_{i}||_{2}}\big(D\eta+\dfrac{2C\sigma_{1}(\gamma)}{\sqrt{s}}
\big). 
$$
\end{enumerate}
\end{thm}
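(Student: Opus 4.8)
The plan is to assemble all three parts from facts already established: part~(i) comes from Corollary~\ref{th:RIP} together with Proposition~\ref{th:RIP-transition}, part~(ii) is obtained by feeding part~(i) into Theorem~\ref{th:estimation}, and part~(iii) converts the $\gamma$-estimate of part~(ii) into a $\beta$-estimate via Step~4 of CRA. For part~(i) I would apply Corollary~\ref{th:RIP} with the choices $\delta=\sqrt{(t-1)/t}$ and $k=ts$. Since $t\ge 4/3$ we have $\delta^2=(t-1)/t\ge 1/4$, so $\delta\in(0,1)$ and the corollary applies; its hypothesis $(n-q)\ge C\delta^{-2}k\log(ep/k)$ reads $(n-q)\ge C\,\tfrac{t}{t-1}\,ts\,\log\!\big(ep/(ts)\big)$, which is the stated threshold once the $t$-dependent factors are folded into the universal constant. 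The corollary then gives $\delta_{ts}(\mathbf{A})\le\delta$ with probability at least $1-2\exp\!\big(-c\tfrac{t-1}{t}(n-q)\big)$, and since $\widetilde{\mathbf{X}}$ is a fixed unitary applied to $\mathbf{A}$ with $q$ zero rows appended, Proposition~\ref{th:RIP-transition} transfers this to $\delta_{ts}(\widetilde{\mathbf{X}})=\delta_{ts}(\mathbf{A})\le\sqrt{(t-1)/t}$ on the same event. Finally I would absorb the prefactor $2$ into the exponent, using that the threshold forces $n-q$ to be bounded below.

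For part~(ii), on the RIP event of part~(i) I would invoke Theorem~\ref{th:estimation} for the model \eqref{secondary}, $\tilde{y}=\widetilde{X}\gamma+\tilde{u}$, whose design matrix satisfies $\delta_{ts}(\widetilde{X})\le\sqrt{(t-1)/t}$ by part~(i) (to obtain the strict inequality demanded by Theorem~\ref{th:estimation} one simply runs Corollary~\ref{th:RIP} with $\delta$ marginally below $\sqrt{(t-1)/t}$). The only hypothesis left to check is that BPDN is run with a legitimate noise level: $\tilde{u}=\Pi_{R^{\perp}}u$ and an orthogonal projection is norm-nonincreasing, so $\|\tilde{u}\|_2\le\|u\|_2\le\eta$. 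Theorem~\ref{th:estimation} then produces the displayed bound on $\|\gamma^{\#}_{\BPDN(\eta)}-\gamma\|_2$, the factor $2$ being a harmless relabeling of the constant $C$.

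For part~(iii) I would first pin down $\gamma$: from $\tilde{y}=\Pi_{R^{\perp}}y=\Pi_{R^{\perp}}X\beta+\Pi_{R^{\perp}}u=\widetilde{X}\,N_{\Pi_{R^{\perp}}X}\,\beta+\tilde{u}$, comparison with \eqref{secondary} lets us take $\gamma=N_{\Pi_{R^{\perp}}X}\beta$. Since $N_{\Pi_{R^{\perp}}X}$ is diagonal with almost surely strictly positive diagonal entries (the probability that some $\mathbf{X}_i$ lies in the $q$-dimensional subspace $\mathcal{R}(R)$ is zero, exactly as in the proof of Theorem~\ref{thmmain}), this $\gamma$ has the same support as $\beta$, hence is $s$-sparse, so part~(ii) applies to it. By Step~4 of CRA, $\hat{\beta}=N_{\Pi_{R^{\perp}}X}^{-1}\hat{\gamma}$, and since $\beta=N_{\Pi_{R^{\perp}}X}^{-1}\gamma$ we get $\hat{\beta}-\beta=N_{\Pi_{R^{\perp}}X}^{-1}(\hat{\gamma}-\gamma)$; as the operator norm of a diagonal matrix is the largest modulus of its entries,
\[
\|\hat{\beta}-\beta\|_2\le\Big(\max_i\|\Pi_{R^{\perp}}X_i\|_2^{-1}\Big)\,\|\hat{\gamma}-\gamma\|_2,
\]
and substituting the bound of part~(ii) yields the stated estimate.

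I expect the only genuine obstacle to be the constant bookkeeping in part~(i): with $\delta=\sqrt{(t-1)/t}$ the factor $\delta^{-2}=t/(t-1)$ blows up as $t\downarrow 1$, so one must verify that, uniformly for $t\ge 4/3$, the quantity $C\delta^{-2}(ts)\log(ep/(ts))$ is genuinely represented by the sample-complexity threshold as written and that $c\delta^2(n-q)$ is genuinely represented by the stated failure-probability exponent, i.e., that the $t$-dependence passing through Theorems~\ref{th11}--\ref{th22} is propagated faithfully rather than quietly absorbed into the constants. Everything else is routine assembly of the cited results.
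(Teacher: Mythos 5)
Your proposal is correct and follows essentially the same route as the paper, which proves the theorem only in a brief remark by citing Corollary~\ref{th:RIP} (with $\delta=\sqrt{(t-1)/t}$, $k=ts$), Proposition~\ref{th:RIP-transition}, Theorem~\ref{th:estimation}, and Step~4 of Algorithm~\ref{alg1}; your write-up supplies the details (the bound $\|\tilde u\|_2\le\eta$, the identification $\gamma=N_{\Pi_{R^{\perp}}X}\beta$, the diagonal operator-norm step) that the paper leaves implicit. The constant-bookkeeping worry you flag does resolve: for $t\ge 4/3$ one has $1\le t/(t-1)\le 4$, so $\delta^{-2}k=\tfrac{t^2}{t-1}s\le 16\,(t-1)s$, and the discrepancy between $C\delta^{-2}k$ and the stated $C(t-1)s$ (as well as the prefactor $2$ in the failure probability) is absorbed into the universal constants, while the exponent $c\delta^2(n-q)=c\tfrac{t-1}{t}(n-q)$ matches exactly.
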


\begin{remark}
 Note that part (i) of Theorem~\ref{main2} is a restatement of
  Corollary~\ref{th:RIP} with $\delta=\sqrt{\frac{t-1}{t}}$ and
  $k=ts$. In particular, we can set $t=2$, which yields the sufficient
  condition
$$(n-q) \ge C s\log\big(\frac{ep}{2s}\big).
$$ 
Part (ii) of Theorem~\ref{main2} follows from
Theorem~\ref{th:RIP-transition} together with part (i). Part (iii) can be justified using the definition of $\gamma$ as well as Step 4 of Algorithm~\ref{alg1}.
\end{remark}



\begin{remark}
Part (iii) of Theorem~\ref{main2} shows that in the estimation of
$\beta$ the error term is inflated and behaves like
$\max(1/||\Pi_{R^{\perp}}X_{i}||_{2})$. However, it should be noted
that no matter what method is used, sparse recovery in highly
correlated settings is inherently unstable when the noise level is high. For
example, let $X_{i}$ and $X_{j}$ be unit-norm and highly correlated,
and let $T_{\beta}$ be the indices of the true support of
$\beta$. Assume $i\in T_{\beta}$ but $j\not\in T_{\beta}$.  Then,
without any assumptions on the structure of the noise $u$, as soon as
$||u||_{2}$ exceeds $||X_{i}-X_{j}||_{2}|\beta_{i}|$ it becomes
impossible to distinguish whether $i\in T_{\beta}$ or $j\in T_{\beta}$
by just considering $y$. If $X_{i}$ and $X_{j}$ are highly correlated,
then $\left\Vert X_{i}-X_{j}\right\Vert _{2}$ tends to be small and
unless $\beta_{i}$ is extremely large,
$||X_{i}-X_{j}||_{2}|\beta_{i}|$ tends to be small too. Hence, nearly
accurate sparse recovery is possible only for modest amounts of
$||u||_{2}$. As we will see in section \ref{sec:Numerical-Results},
the numerical results corroborate this result.
\end{remark}

\begin{remark}
  Our analysis in this section has been based on the (unrealistic)
  assumption that $\mathcal{R}(R)^\perp$ can be perfectly
  estimated. Let $E$ be an imperfect estimate of
  $\mathcal{R}(R)^\perp$. By some algebra it can be shown that if $E$
  is reasonably accurate, one can bound the matrix norm of the
  difference of $\widetilde{X}$ and $\widehat{X}:=\Pi_EX
  N_{\Pi_EX}^{-1}$. From there on we refer the reader to the analysis
  of \cite{herman2010general} which suggests that $\widehat{X}$ will
  still satisfy the restricted isometry property but with different
  constants, which depend on the restricted isometry constants of
  $\widetilde{X}$ and the accuracy of $E$. We also note that the
  numerical experiments we present in the next section use design
  matrices for which clusters or $\mathcal{R}(R)^\perp$ are not known
  a priori and can only be imperfectly estimated. Yet, the empirical results
  illustrate that CRA is still effective in these examples.  
\end{remark}

\section{Empirical Performance\label{sec:Numerical-Results}}

For the purpose of testing the empirical performance of our algorithm,
we provide two sets of experiments. In the first one, we generate
$X$ synthetically from a factor model. In the second experiment, we use correlated stock price time-series as the columns of $X$. In
each experiment, we run multiple trials in which we randomly choose
a $20-$sparse $\beta$ such that every non-zero entry is uniformly
distributed in $[1,2]$. Using this chosen $\beta$, we generate a
$y$ vector and test how well our algorithms can estimate $\beta$
given $y$ and $X$. We report the average performance of each algorithm
across all trials. 

Following the suggestion of \cite{belloni2013}, beside CRA, basis
pursuit, and SWAP, we add two new estimators: CRA-OLS, and BPDN-OLS.
In CRA-OLS and BPDN-OLS, we use the supports of $\hat{\beta}_{CRA}^{\{20\}}$
and $\hat{\beta}_{BPDN}^{\{20\}}$ as the support estimate of $\beta$.
Let $\hat{{T}}_{\beta}$ be this estimated support. We estimate the
non-zero entries of $\beta$ by $X_{\hat{T}_{\beta}}^{\dagger}y$.

When generating $y$ from $\beta$, we choose the noise vector
\inputencoding{latin1}{$u$ according to $\ensuremath{\alpha
    N(0_{n\times1},I_{n\times n})}$}\inputencoding{latin9}
distribution, where $\alpha$ is chosen such that the SNR is equal to
$\sigma$ dB with $\sigma\in\{10,15,20,...,100\}$.  For the
initialization step of SWAP and the sparse recovery step of CRA, we
use BPDN, with $||u||_{2}$ as its parameter. The solver that we use
for basis pursuit is SPGL1 \cite{BergFriedlander:2008,spgl1:2007}. For
each noise level, we generate 30 realizations of $\beta$ and as
our first performance measure, we report the average over these trials of
$\dfrac{||\beta-\hat{\beta}^{\{20\}}||_{2}}{||\beta||_{2}}$ across
different noise levels, for all five algorithms.

To empirically test the support recovery performance of CRA, we next
define the true positive rate of $\hat{\beta}$ to be
$$\TPR(\hat{\beta}):=\dfrac{|\text{supp}(\beta)\cap
  \text{supp}(\hat{{\beta}|}}{|\text{supp}(\beta)|}.
$$
In support recovery literature, the goal is to have a sparse
$\hat{\beta}$ such that $\TPR(\hat{\beta})$ is as close to one as
possible. As our second measure of performance, we report the average
$\TPR$ over 30 trials (as described above) of CRA, BPDN, and SWAP
across different noise levels. In noisy settings, results of basis
pursuit and CRA have a large number of non-zero entries.  Therefore,
although $\hat{\beta}_{CRA}$ and $\hat{\beta}_{BPDN}$ contain a large
proportion of the true support, they are not informative for
identifying it. To have a reasonable measure of performance for CRA
and BPDN, we report $\TPR(\hat{\beta}_{CRA}^{\{20\}})$ and
$\TPR(\hat{\beta}_{BPDN}^{\{20\}})$.

\subsection{Synthetic Data}\label{sec_A}
In our first experiment, we aim to test the performance of our algorithm on a synthetically generated data. Inspired by factor models, which are widely used in econometrics
and finance (we refer to \cite{bai2008large} for a survey on factor models and their applications), we generate the data matrix, $X$, 
according to
\[
X=F\Lambda+Z
\]
where $F\in\mathbb{R}^{n\times q}$, $\Lambda\in\mathbb{R}^{q\times p}$,
$Z\in\mathbb{R}^{n\times p}$, and $q\ll n$. The columns of $F$ are
the underlying factors that drive the cross-correlation among the columns
of $X$. $\Lambda$ is the coefficient matrix which determines the
extent of the exposure of each column of $X$ to the $F_{i}$'s. $Z$
is the idiosyncratic variation in the data, which in this experiment
is generated according to an i.i.d Gaussian distribution. We generate the $F_{i}$
according to an $\mbox{ARMA(2,0)}$ model as follows: 
\[
F_{t,i}=0.5F_{t-1,i}+0.3F_{t-2,i}+v_{t},\; v_{t}\sim N(0,1)
\]
This formulation ensures that entries of $X$ are reasonably correlated
both across rows and columns and therefore, it gives the data a rich correlation structure. We choose $n=250$, $p=1000$, and $q=25$. 

Even though the data is not explicitly decomposed of clusters
distributed uniformly on spherical caps, we can still apply CRA to
this problem. In this case, the columns of $F$ will play the role of
cluster centers. Due to the factor structure of the data, and since we
only need to estimate the $\mathcal{R}(F)$ to construct
$\widetilde{X}$, we use the range of the most significant $q$
eigenvectors of $XX'$ for estimating $\mathcal{R}(F)$\footnote{We
  can also use k-means clustering for this purpose and the estimation
  results are very similar.}. Figure \ref{fig:svd-compare-synthetic}
demonstrates the singular values of $X$ on the left hand side, and
singular values of $\widetilde{X}$ on the right. For comparison
purposes, we have also plotted the singular values of a $250\times
1000$ matrix with columns uniformly distributed on $S^{224}$. As we
can see, the singular values of $\widetilde{X}$ closely resemble those
of the matrix with uniformly distributed columns.

\begin{figure}[t]
  \centering \includegraphics[height=6.5cm]{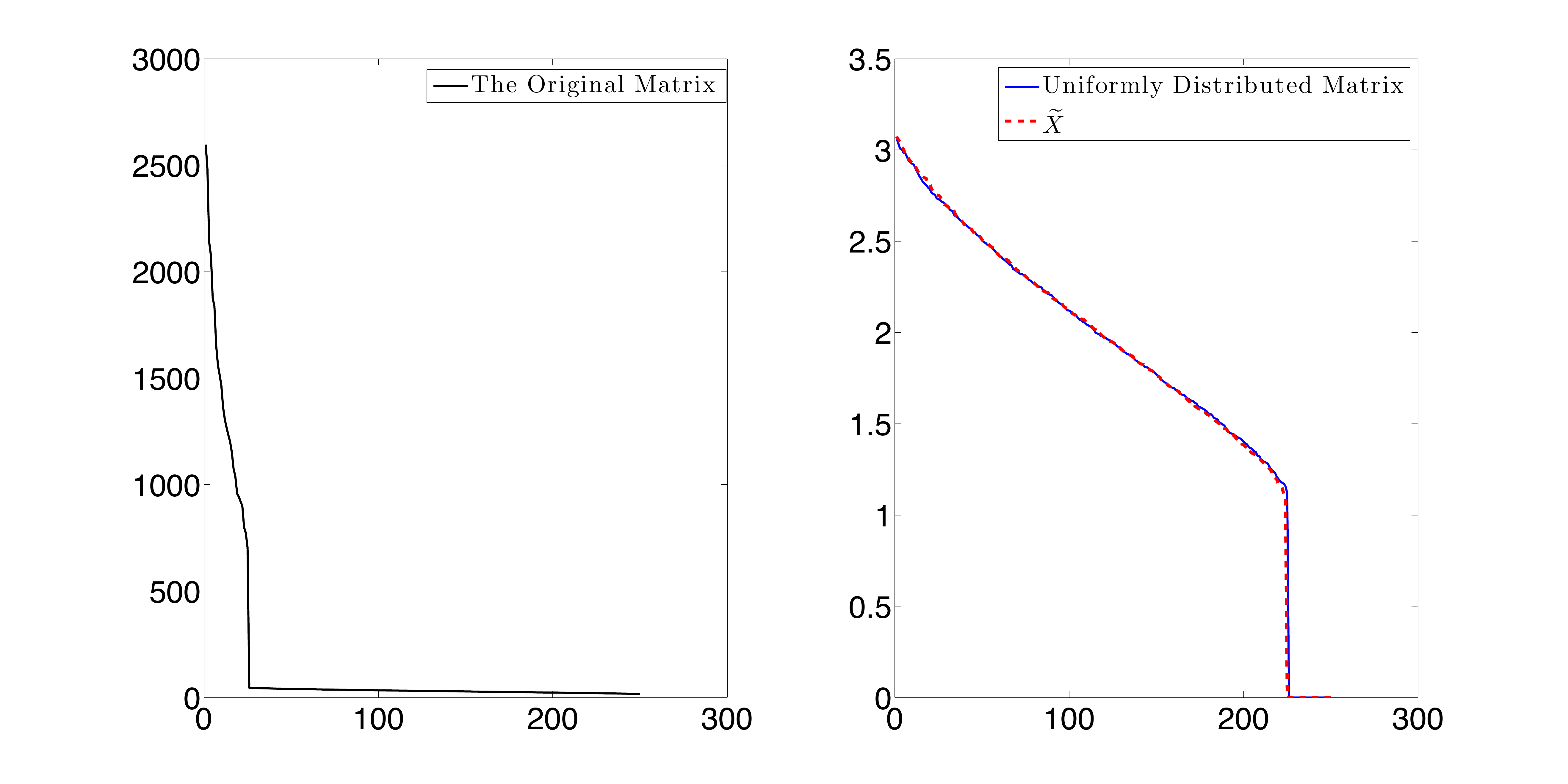}
  \protect\caption{The SVD comparison of $\widetilde{X}$, $X$, and a
    uniformly distributed matrix in the ``synthetic data'' experiment of
  Section~\ref{sec_A}. \label{fig:svd-compare-synthetic}}
\end{figure}
 
Figure \ref{fig:Estimation-err-synthetic} shows the average value of
$\dfrac{||\beta-\hat{\beta}^{\{20\}}||_{2}}{||\beta||_{2}}$ for all of
our algorithms, across various noise levels. As Figure
\ref{fig:Estimation-err-synthetic} demonstrates, for SNR levels more
than $30$ dB, CRA-OLS and CRA have the best performance. For SNR
levels less than $30$ dB, the estimation quality deteriorates
significantly to the extent that no algorithm provides a reasonable
estimate of $\beta$.

\begin{figure}[t]
\centering \includegraphics[height=6.5cm]{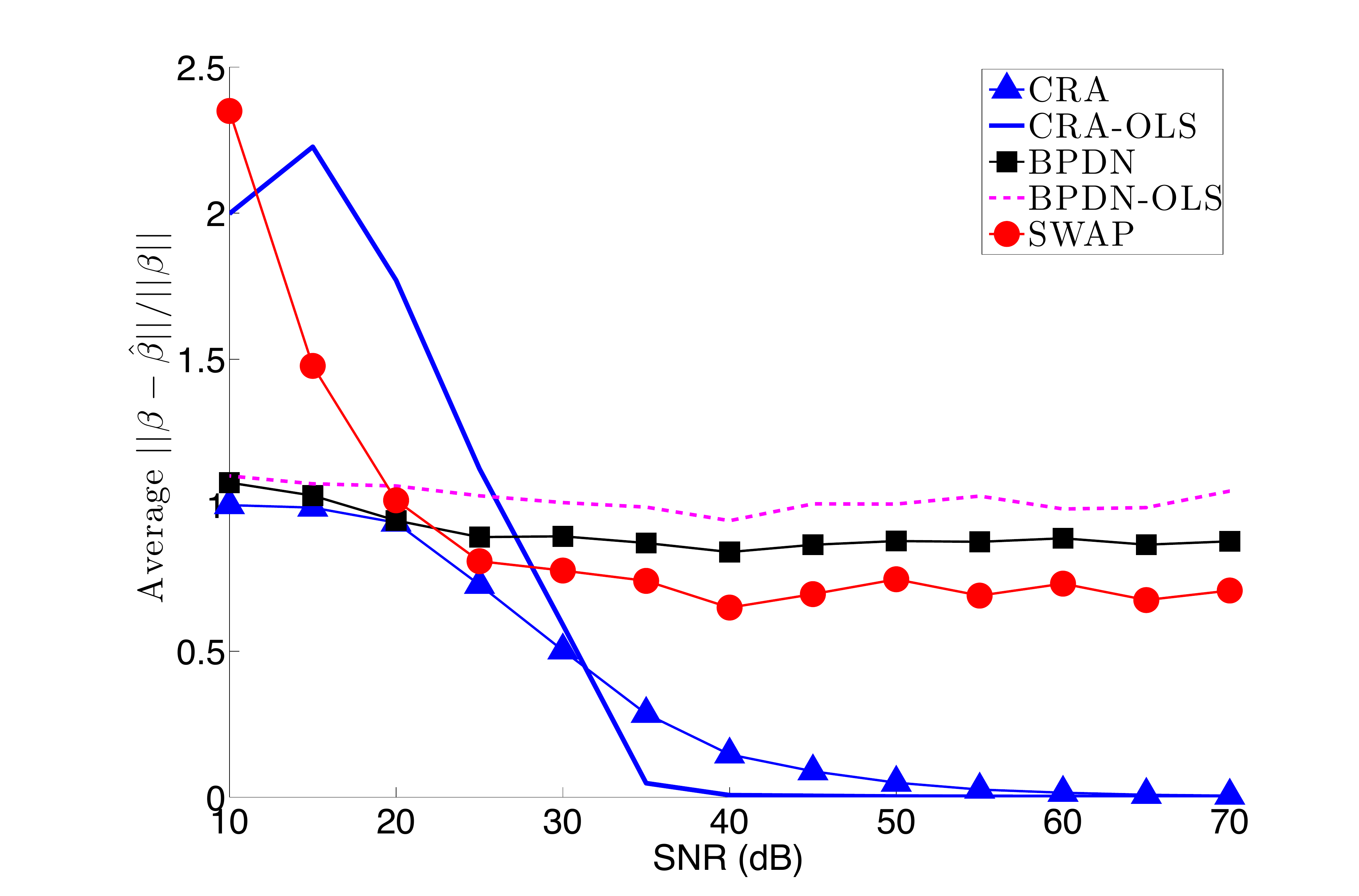}
\protect\caption{The average estimation error across different noise
  levels in the ``synthetic data'' experiment of
  Section~\ref{sec_A}. Each data point corresponds to the average
  estimation error over 30 realizations. \label{fig:Estimation-err-synthetic}}
\end{figure}

Figure \ref{fig:Estimation-tpr-synthetic} compares the true positive rate of the algorithms. As the figure suggests, CRA recovers most of the true support up until $30$ dB. Afterwards, the estimation quality falls significantly for all of the algorithms. 

\begin{figure}[t]
  \centering \includegraphics[height=6.5cm]{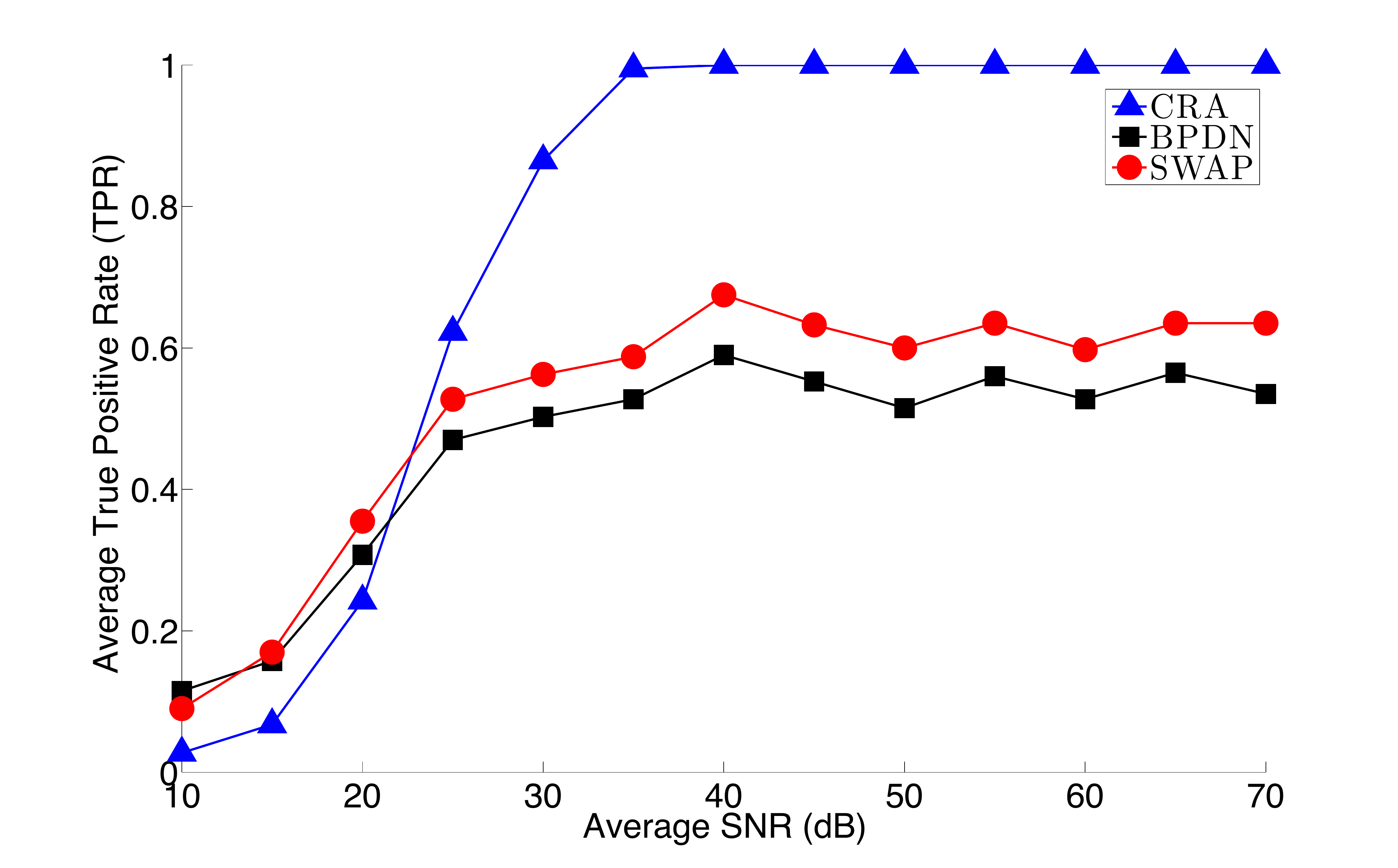}
  \protect\caption{The true positive rate across different noise
    levels in the ``synthetic data'' experiment of
    Section~\ref{sec_A}. Each data point corresponds to the average
    true positive rate over 30
    realizations. \label{fig:Estimation-tpr-synthetic}}
\end{figure}

The average running time of the algorithms, all on the same machine,
are presented in Table~\ref{t1}. Empirically, we observe that SPGL1 solves the CRA problem, which does not have high correlation in it, much faster. Moreover both CRA and BPDN are considerably faster than SWAP. 
\begin{table}[h]
\begin{center}
\begin{tabular}{|l|r|}
\hline 
Algorithm  & Running Times (s)\tabularnewline
\hline 
\hline 
CRA (without clustering)  & 0.044\tabularnewline
\hline 
clustering with eigenvectors  & 0.023 \tabularnewline
\hline 
BPDN  & 0.730\tabularnewline
\hline 
SWAP   & 4.100 \tabularnewline
\hline 
 
\end{tabular}
\par\end{center}
\caption{\label{t1} Average running times of various algorithms over
  570 trials (30 trials per noise level for 19 different values of $\sigma$)  for the example
  presented in Section~\ref{sec_A}.}
\end{table}

\subsection{Pseudo-Real Data} \label{sec_B}
\subsubsection{The Data Description}

For the purpose of the numerical simulations with real data, we consider
daily stock prices. Daily stock prices present us with a real life
data set with a complex high correlation environment. We use the prices
of 2000 NASDAQ stocks recorded from June 15 2012 to June 15 2014%
\footnote{The data is downloaded from Yahoo! Finance%
}. After removing the holidays, non-trading days, and special trading
session, each stock has 500 observations. To give some structure to
the data, we remove the trends from each time series, and also rescale
the $l_{2}$ norm of each series to $\sqrt{n}$. After this stage,
the data is used to populate the matrix $X\in\mathbb{R}^{500\times2000}$.
Figure \ref{fig:The-Graphical-Representation} is the graphical representation
of $\dfrac{1}{n}X^{*}X$. It is evident that, even after all modification
on the data, $\dfrac{1}{n}X^{*}X$ is very different from identity,
and hence $X$ does not fit into the category of suitable matrices
for sparse recovery.

\begin{figure}[t]
\centering

\includegraphics[width=9cm]{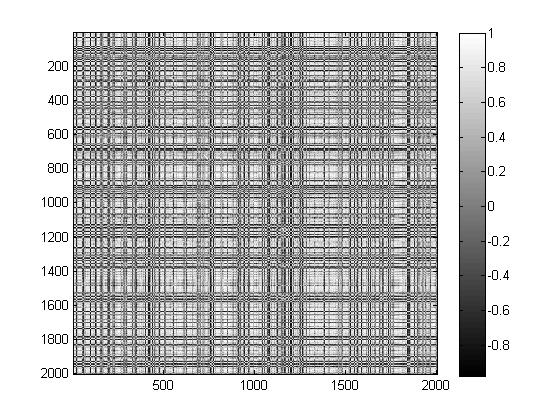}

\protect\protect\caption{The matrix $\frac{1}{n}X^{*}X$. \label{fig:The-Graphical-Representation}}
\end{figure}

\subsubsection{Results}

By using k-means clustering, we identify 14 clusters in the data, hence $R\in\mathbb{R}^{500\times14}$.
Figure \ref{fig:The-Covariance-Matrix} is the graphical representation
of $\frac{1}{n}\left(\Pi_{R^{\perp}}XN_{\Pi_{R^{\perp}}X}^{-1}\right)^{*}\left(\Pi_{R^{\perp}}XN_{\Pi_{R^{\perp}}X}^{-1}\right)$.
As we can see, we have a tremendous improvement in the covariance
structure of the data matrix.

\begin{figure}[t]
\centering \includegraphics[width=9cm]{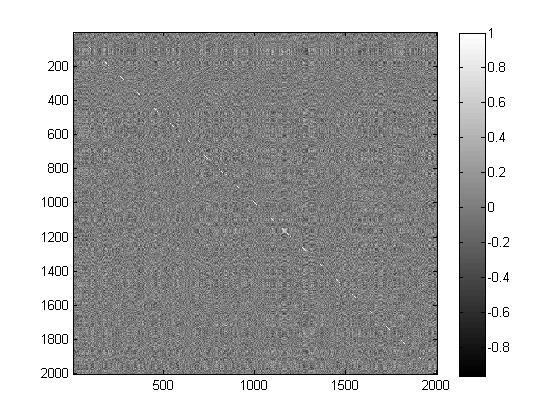}

\protect\protect\caption{The covariance matrix of $\Pi_{R^{\perp}}XN_{\Pi_{R^{\perp}}X}^{-1}$
\label{fig:The-Covariance-Matrix}}
\end{figure}

Figure \ref{fig:svd-real-comparison} shows the singular values of $X$
on the left hand side, and singular values of $\widetilde{X}$ on the
right. For comparison we include in the left figure the plot of the
singular values of a $500\times 2000$ matrix with columns uniformly
distributed on $S^{484}$. Although there is considerable difference
between the singular values of $\widetilde{X}$ and those of the matrix
with uniformly distributed columns, we observe a vast improvement
compared to the original case. This improvement is in fact sufficient
for us to successfully perform sparse estimation. One can choose more
than 14 clusters or use a more advanced clustering method to make the
singular values of $\widetilde{X}$ closer to the uniform
case. However, we empirically observed that this does not improve the
estimation results significantly.

\begin{figure}[t]
\centering \includegraphics[height=6.5cm]{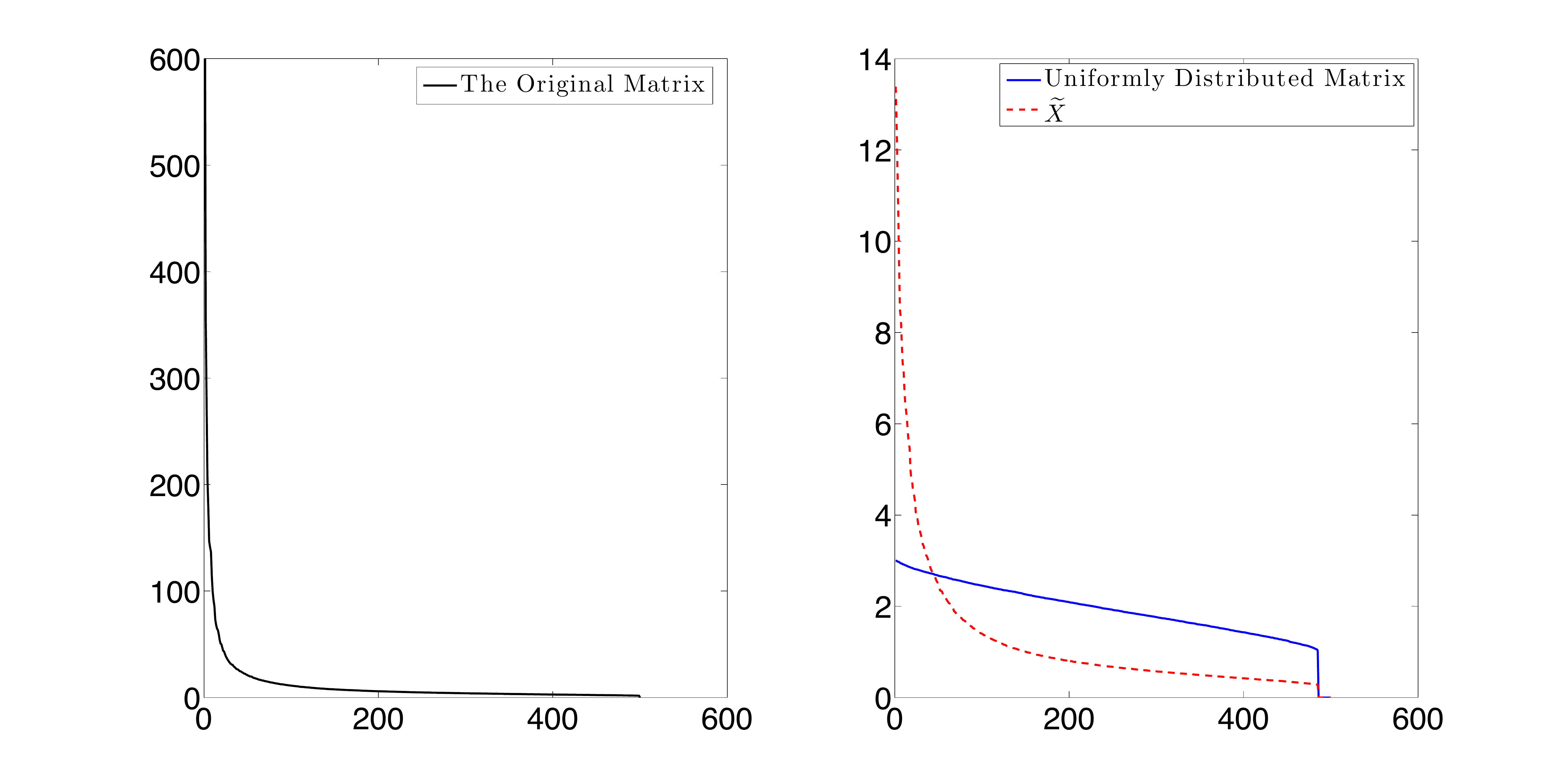}

\protect\protect\caption{The SVD comparison of $\widetilde{X}$, $X$, and a
    uniformly distributed matrix in the ``pseudo-real data'' experiment of
  Section~\ref{sec_B}.
\label{fig:svd-real-comparison}}
\end{figure}

Figure \ref{fig:The-Average-Estimation-err} demonstrates the average
value of $\dfrac{||\beta-\hat{\beta}^{\{20\}}||_{2}}{||\beta||_{2}}$
for all five algorithms. As Figure
\ref{fig:The-Average-Estimation-err} shows, in low noise settings,
CRA-OLS and CRA have the best performance among the algorithms. As the
noise level increases, distinguishing between the columns of $X$
becomes almost impossible. For noise levels more than $20$ dB, none of
the algorithms provide a reasonable estimate of $\beta$.

\begin{figure}[h]
\centering \includegraphics[height=6.5cm]{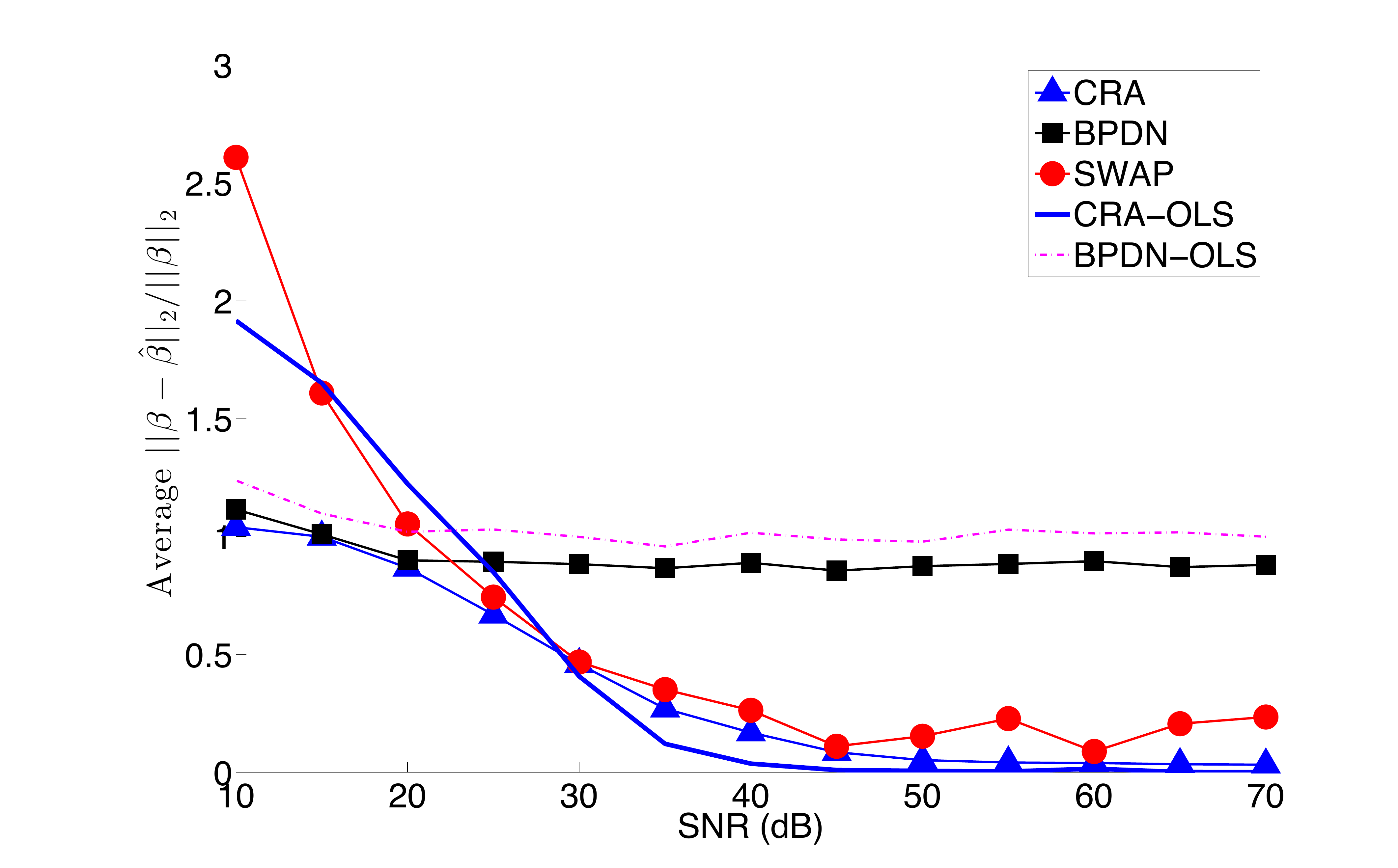}

\protect\caption{The average estimation error across different noise
  levels in the ``pseudo-real data'' experiment of
  Section~\ref{sec_B}. Each data point corresponds to the average
  estimation error over 30 realizations. \label{fig:The-Average-Estimation-err}}
\end{figure}

Figure \ref{fig:The-Average-True-Positive Rate} shows the average
value of $\TPR(\hat{\beta}^{\{20\}})$ across different noise levels,
for each algorithm. It can be observed that again CRA outperforms
both SWAP and BPDN.

\begin{figure}[h]
\centering \includegraphics[height=6.5cm]{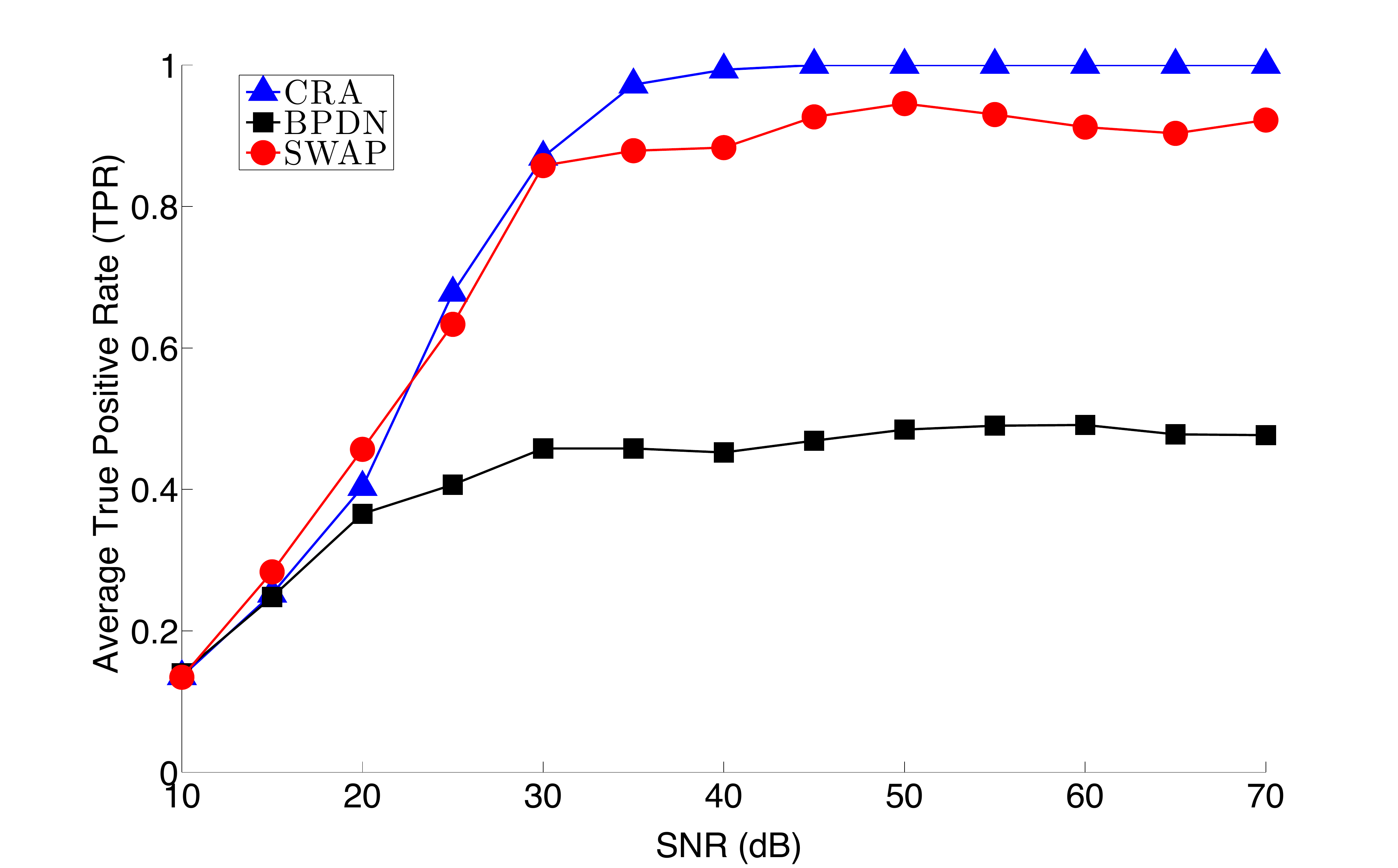}
\protect\caption{The true positive rate across different noise
    levels in the ``pseudo-real data'' experiment of
    Section~\ref{sec_B}. Each data point corresponds to the average
    true positive rate over 30
    realizations.\label{fig:The-Average-True-Positive Rate}}
\end{figure}

In SWAP, the sparsity rate of the final vector is always equal to
the size of the initial support provided, $|T_{0}|$. To make sure
that SWAP's estimate has a high TPR ( although in the expense of adding
false positive indices), \cite{vats2013swapping} suggest providing
a larger $T_{0}$. Indeed, our results suggest that if we allow SWAP
to search for a 30-sparse vector in estimating a 20-sparse $\beta$,
$\TPR(\hat{\beta}_{SWAP})$ increases. However, as the size of the
support that SWAP is working with increases, the running time of SWAP
increases significantly. The following table represent the average
running time of the algorithms on the same machine\footnote{The
  experiments presented in Sections~\ref{sec_A} and \ref{sec_B} are
  conducted on two different machines respectively, so the running
  times presented in Table~\ref{t1} and Table~\ref{t2} cannot be compared directly.}

\begin{table}[h]
\begin{center}
\begin{tabular}{|l|r|}
\hline 
Algorithm  & Running Times (s)\tabularnewline
\hline 
\hline 
BPDN  & 0.90 \hskip 5cm\tabularnewline
\hline 
K-means Clustering  & 2.48 \hskip 5cm\tabularnewline
\hline 
CRA (with Clustering)  & 3.70 \hskip 5cm\tabularnewline
\hline 
SWAP (20-sparse)  & 18.57 \hskip 5cm\tabularnewline
\hline 
SWAP (30-sparse)  & 146.27 \hskip 5cm\tabularnewline
\hline 
\end{tabular}
\par\end{center}
\caption{\label{t2} Average running times of various algorithms over
  570 trials (30 trials per noise level for 19 different values of $\sigma$)  for the example
  presented in Section~\ref{sec_B}.}
\end{table}

A second point to consider is the sensitivity of the algorithm's performance
to the number of observations. We run the same experiment as above
but we discard the first 250 observations so $X\in\mathbb{R}^{250\times2000}$.
Figure \ref{fig:The-Difference} shows the difference between the
average TPR with 500 observations and the average TPR with 250 observations
for each algorithm. In comparison to other algorithms, CRA's performance
falls much less when the number of observations decreases while SWAP's
performance degrades significantly.

\begin{figure}[h]
\centering \includegraphics[height=6.5cm]{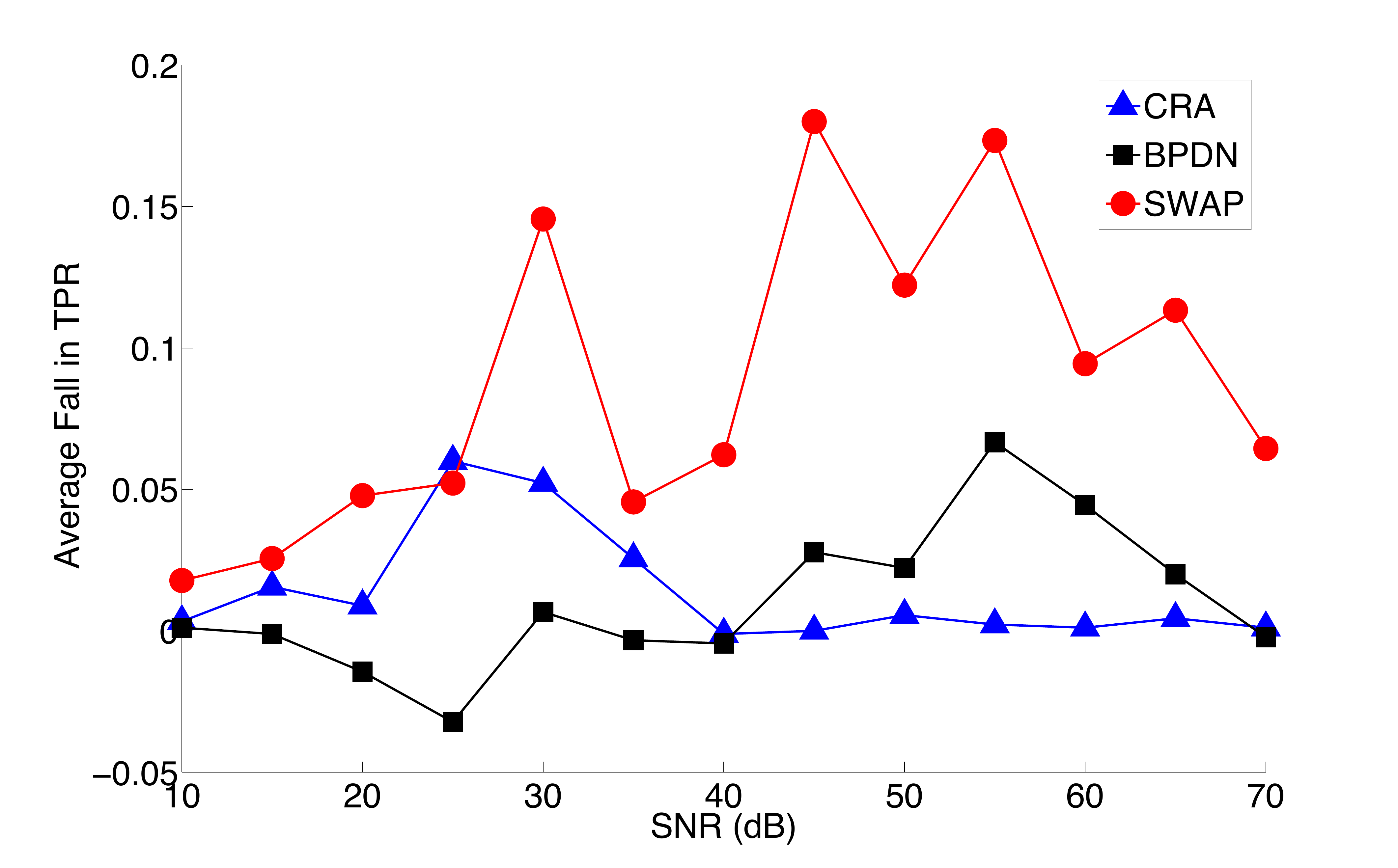}

\protect\caption{The difference between the performances of each
  algorithm when the number of observations is reduced from 500 to
  250.  \label{fig:The-Difference}}
\end{figure}

\section{Conclusions and Future Work}

CRA is a computationally efficient estimation method that allows for
sparse recovery in highly correlated environments. As we showed in
Section~\ref{sec:Numerical-Results}, even with extremely simple
clustering algorithms, CRA's empirical performance is superior to
other state-of-the-art sparse recovery algorithms. Moreover, CRA is
computationally efficient (as reflected in the run times given in
Tables~\ref{t1} and \ref{t2}) which makes it suitable for working with
large data sets. In addition, the algorithm provides a large degree of
flexibility in the choices of the sparse recovery technique and
clustering method. Accordingly, CRA is a versatile sparse estimation
method for cases where the design matrix (or equivalently, the
compressive sensing measurement matrix) has correlated columns that
can be grouped into a small number of clusters as we described
before. 

Despite all these strengths, there are certain setting where one may
need to modify CRA. First, as we noted earlier, in noisy environments,
as the data becomes more and more highly correlated, distinguishing
the columns from each other becomes impossible. In this case, to
decrease the correlation in the matrix, one might want to group the
columns that are extremely correlated with each other and then run CRA
on the representative vectors of the grouped variables. There are many
approaches for variable grouping, e.g., \cite{2012arXiv1209.5908B} and
\cite{OWL} introduce two new automated grouping algorithms and provide
a short survey of the field. Combining CRA with these variable
grouping algorithms is an interesting subject for future research.

Another interesting problem is whether one can relax the rigid cluster
structure imposed by our assumptions. This would mean that we adapt
CRA for sparse recovery in the case of design matrices with correlated
columns that do not satisfy our Assumptions~\ref{a0}-\ref{a2}. 

\section{Acknowledgements} B. Ghorbani was funded by a UBC Arts Summer
Research Award and a UBC Arts Undergraduate Research
Award. {\"O}.~Y{\i}lmaz was funded in part by a Natural Sciences and
Engineering Research Council of Canada (NSERC) Discovery Grant
(22R82411), an NSERC Accelerator Award (22R68054) and an NSERC
Collaborative Research and Development Grant DNOISE II (22R07504).

 \bibliographystyle{plain}
\bibliography{References}

\end{document}